\documentclass[11pt,a4paper,reqno,intlimits,sumlimits]{amsart}

\usepackage{amssymb}
\usepackage{color}
\usepackage[mathscr]{euscript}
\usepackage{xspace}
\usepackage{enumerate}
\usepackage{epsfig,rotating}
\input{xy}\xyoption{all}

\DeclareMathAlphabet{\mathpzc}{OT1}{pzc}{m}{it}

\begin{document}

%% ========================================================================== %%
%% ========================================================================== %%
\theoremstyle{plain}
\newtheorem{theorem}{Theorem}[section]
\newtheorem{lemma}[theorem]{Lemma}
\newtheorem{proposition}[theorem]{Proposition}
\newtheorem{claim}[theorem]{Claim}
\newtheorem{corollary}[theorem]{Corollary}
\newtheorem{axiom}{Axiom}

\theoremstyle{definition}
\newtheorem{remark}[theorem]{Remark}
\newtheorem{note}{Note}[section]
\newtheorem{definition}[theorem]{Definition}
\newtheorem{example}[theorem]{Example}
\newtheorem*{ackn}{Acknowledgements}
\newtheorem{assumption}{Assumption}
\newtheorem{approach}{Approach}
\newtheorem{critique}{Critique}
\newtheorem{question}{Question}
\newtheorem{aim}{Aim}
\newtheorem*{asa}{Assumption ($\mathbf{A}$)}
\newtheorem*{asp}{Assumption ($\mathbb{P}$)}
\newtheorem*{ass}{Assumption ($\mathbb{S}$)}
%% ========================================================================== %%
%% ========================================================================== %%
\renewcommand{\theequation}{\thesection.\arabic{equation}}
\numberwithin{equation}{section}

\renewcommand{\thefigure}{\thesection.\arabic{figure}}
\numberwithin{equation}{section}

\newcommand{\Law}{\ensuremath{\mathop{\mathrm{Law}}}}
\newcommand{\loc}{{\mathrm{loc}}}

\let\SETMINUS\setminus
\renewcommand{\setminus}{\backslash}

\def\stackrelboth#1#2#3{\mathrel{\mathop{#2}\limits^{#1}_{#3}}}

\newcommand{\prozess}[1][L]{{\ensuremath{#1=(#1_t)_{0\le t\le T}}}\xspace}
\newcommand{\prazess}[1][L]{{\ensuremath{#1=(#1_t)_{t\ge0}}}\xspace}
\newcommand{\pt}[1][N]{\ensuremath{\P_{#1}}\xspace}
\newcommand{\tk}[1][N]{\ensuremath{T_{#1}}\xspace}
\newcommand{\dd}[1][]{\ensuremath{\ud{#1}}\xspace}

\newcommand{\scal}[2]{\ensuremath{\langle #1, #2 \rangle}}
\newcommand{\set}[1]{\ensuremath{\left\{#1\right\}}}
%% ========================================================================== %%
%% ========================================================================== %%
\def\lev{L\'{e}vy\xspace}
\def\lk{L\'{e}vy--Khintchine\xspace}
\def\lib{LIBOR\xspace}
\def\mg{martingale\xspace}
\def\smmg{semimartingale\xspace}
\def\alm{affine LIBOR model\xspace}
\def\alms{affine LIBOR models\xspace}
\def\dalms{defaultable \alms}

\def\half{\frac{1}{2}}

\def\F{\ensuremath{\mathcal{F}}}
\def\bD{\mathbf{D}}
\def\bF{\mathbf{F}}
\def\bG{\mathbf{G}}
\def\bH{\mathbf{H}}
\def\R{\ensuremath{\mathbb{R}}}
\def\Rp{\mathbb{R}_{\geqslant0}}
\def\Rm{\mathbb{R}_{\leqslant 0}}
\def\C{\ensuremath{\mathbb{C}}}
\def\U{\ensuremath{\mathcal{U}}}
\def\I{\mathcal{I}}

\def\P{\ensuremath{\mathrm{I\kern-.2em P}}}
\def\Q{\mathbb{Q}}
\def\E{\ensuremath{\mathrm{I\kern-.2em E}}}

\def\ott{{0\leq t\leq T}}
\def\idd{{1\le i\le d}}

\def\icc{\mathpzc{i}}
\def\ecc{\mathbf{e}_\mathpzc{i}}

\def\uk{u_{k+1}}
\def\vk{v_{k+1}}

\def\e{\mathrm{e}}
\def\ud{\ensuremath{\mathrm{d}}}
\def\dt{\ud t}
\def\ds{\ud s}
\def\dx{\ud x}
\def\dy{\ud y}
\def\dv{\ud v}
\def\dsdx{\ensuremath{(\ud s, \ud x)}}
\def\dtdx{\ensuremath{(\ud t, \ud x)}}

\def\lsnc{\ensuremath{\mathrm{LSNC-}\chi^2}}
\def\nc{\ensuremath{\mathrm{NC-}\chi^2}}
%% ========================================================================== %%
%% ========================================================================== %%
\newcommand{\cD}{{\mathcal{D}}}
\newcommand{\cF}{{\mathcal{F}}}
\newcommand{\cG}{{\mathcal{G}}}
\newcommand{\cH}{{\mathcal{H}}}
\newcommand{\cK}{{\mathcal{K}}}
\newcommand{\cM}{{\mathcal{M}}}
\newcommand{\cT}{{\mathcal{T}}}
\newcommand{\ha}{{\mathbb{H}}}
\newcommand{\indik}{{\mathbf{1}}}
\newcommand{\ifdefault}[1]{\ensuremath{\mathbf{1}_{\{\tau \leq #1\}}}}
\newcommand{\ifnodefault}[1]{\ensuremath{\mathbf{1}_{\{\tau > #1\}}}}
%% ========================================================================== %%
%% ========================================================================== %%

\title{A tractable LIBOR model with default risk}

\author{Zorana Grbac}
\author{Antonis Papapantoleon}

\address{D\'epartment Math\'ematiques, Universit\'e d'\'Evry Val d'Essonne, 23
         boulevard de France, F-91037 \'Evry Cedex, France}
\email{zorana.grbac@univ-evry.fr}

\address{Institute of Mathematics, TU Berlin, Stra\ss e des 17. Juni 136,
         10623 Berlin, Germany}
\email{papapan@math.tu-berlin.de}

\thanks{We thank Monique Jeanblanc for several interesting discussions and
comments, and two anonymous referees for their detailed reports. Z.~G.
acknowledges the financial support from an AMAMEF short visit Grant (No. 2856).
Her research benefited from the support of the `Chaire Risque de cr\'edit' and
F\'ed\'eration Bancaire Fran\c caise. A.~P. acknowledges the financial support
from the DFG Research Center \textsc{Matheon}}

\keywords{LIBOR rates, default risk, affine processes, affine LIBOR models,
          analytically tractable models, CDS spread, counterparty risk}
\subjclass[2000]{91G40, 91G30, 60G44 (2010 MSC).\\\indent
                 \textit{JEL Classification.} G12, G13}
\date{}\maketitle\pagestyle{myheadings}\frenchspacing

\begin{abstract}
We develop a model for the dynamic evolution of default-free and defaultable
interest rates in a LIBOR framework. Utilizing the class of affine processes,
this model produces positive LIBOR rates and spreads, while the dynamics are
analytically tractable under defaultable forward measures. This leads to
explicit formulas for CDS spreads, while semi-analytical formulas are derived
for other credit derivatives. Finally, we give an application to counterparty
risk.
\end{abstract}

\section{Introduction}

The current financial crisis has brought default risk to the forefront of
attention, with large corporations and even countries being on the verge of
bankruptcy. This has led to a renewed demand for credit derivatives, which can
be used for hedging (or even for speculative) purposes; see the December 2011 issue of
the BIS Quarterly Review.

In this work, we present a tractable model for default-free and defaultable
interest rates and study the pricing of credit derivatives in this framework.
More precisely, we work in a discrete tenor framework and use the \textit{LIBOR
rate} as the risk-free rate. Of course, LIBOR is not considered risk-free any
longer, see e.g. \cite{Crepey_Grbac_Nguyen_2011} or
\cite{Filipovic_Trolle_2011}, but one can simply replace the \lib with the
``true'' risk-free rate in today's markets. Next, we consider a corporation that
issues bonds which are subject to default risk. The riskiness of these bonds is
reflected in their pre-default values, and we use them to derive the
\textit{defaultable LIBOR rate}, following \cite{Schoenbucher00} and
\cite{EberleinKlugeSchoenbucher06}. The defaultable LIBOR rate can be
interpreted as the effective rate a corporation pays for borrowing money, which
typically equals the LIBOR rate plus a (stochastic) spread (see also
\eqref{libor-spread}).

Therefore, the aim of this paper is to develop an analytically tractable model
for the joint evolution of default-free and defaultable LIBOR rates. The
classical models for LIBOR rates, based on the seminal articles by
\cite{SandmannSondermannMiltersen95}, \cite{BraceGatarekMusiela97}
and \cite{Jamshidian97}, are known to suffer from severe intractability
problems. This has led to a multitude of approximation methods; see
\cite[Ch.~10]{GatarekBachertMaksymiuk06} for an overview. These numerical
problems are propagated in the defaultable framework; let us just mention that
closed-form expressions do not exist even in the simple Brownian framework.
\cite{EberleinKlugeSchoenbucher06} use the so-called frozen drift
approximation to derive prices for derivatives, but this is well-known to
perform poorly for long maturities and high volatilities; we refer to
\cite{PapapantoleonSchoenmakersSkovmand10} for numerical experiments and
alternative approximation schemes.

In order to overcome these problems, we work in the framework of the
\textit{affine LIBOR models} recently introduced by
\cite{KellerResselPapapantoleonTeichmann09}. These models are based on the wide
and flexible class of affine processes, see
\cite{DuffieFilipovicSchachermayer03}, and have very appealing properties:
LIBOR rates are positive, the model is arbitrage-free and the dynamics remain
tractable under forward measures. The last property allows for semi-analytical
pricing of many interest rate derivatives using Fourier transforms. We extend
these models to the defaultable setting in a fashion that: (i) default-free and
defaultable rates are positive, (ii) the market is free of arbitrage and (iii)
the dynamics of rates are analytically tractable under restricted defaultable
forward measures. In this framework, we can derive fully explicit formulas for
CDS rates, while other credit derivatives admit semi-analytical pricing
formulas.

The defaultable affine LIBOR model we develop belongs to the reduced form
approach for modeling credit risk, where the default time is modeled as the
first jump of a Cox process, with a given cumulative hazard process. This
approach was studied by \cite{ElliottJeanblancYor00} and
\cite{JeanblancLeCam08a} from a theoretical perspective. For concrete examples
of cumulative hazard process models we refer to
\cite{EberleinKlugeSchoenbucher06} and \cite{KokholmNicolato10}, and
further references therein.

The paper is organized as follows. In Section \ref{alm} we recall the definition
and some important properties of affine processes and provide an overview of the
affine LIBOR models. In Section \ref{dalm-sec} we present the extension to the
defaultable setting. In particular, we introduce defaultable LIBOR rates and
discuss the no-arbitrage conditions in this set-up. This discussion follows
along the lines of \cite{EberleinKlugeSchoenbucher06}. Then we construct a
model for the dynamics of default intensities using a suitable family of
exponentially-affine processes. The conditions for fitting the initial term
structure of defaultable bonds are provided and analyzed. As an interesting
consequence, we obtain that the hazard process in this model is an affine
transformation of the driving affine process at tenor dates, which provides a
direct link to hazard process models for credit risk. In Section
\ref{dalm-price} we derive valuation formulas for various credit derivatives. An
explicit pricing formula is presented for credit default swaps and
semi-analytical formulas are derived for options on defaultable bonds. Finally,
we propose an application of the defaultable affine LIBOR model to counterparty
risk and study the pricing of vulnerable options. The pricing formulas are again
semi-analytical, based on Fourier transforms.

\section{The default-free affine LIBOR model}
\label{alm}

We provide below a brief overview of the construction and the main results of
the affine LIBOR model; for more details and proofs we refer to
\cite{KellerResselPapapantoleonTeichmann09}.

\subsection{Affine processes}

Let $(\Omega,\F,\bF,\P)$ denote a complete stochastic basis, where
$\bF=(\F_t)_{t\in[0,T]}$, and let $0 < T \le \infty$ denote some, possibly
infinite, time horizon. We consider a process $X$ of the following type:
\begin{asa}\label{assumption-affine}
Let \prozess[X] be a conservative, time-homogene\-ous, stochastically
continuous Markov process taking values in $D=\Rp^d$, and $(\P_x)_{x\in D}$ a
family of probability measures on $(\Omega,\F)$, such that $X_0 = x$
$\P_x$-almost surely, for every $x \in D$. Setting
\begin{align}
\mathcal{I}_T
:= \set{u\in\R^d: \E_x\big[\e^{\scal{u}{X_T}}\big] < \infty,
        \,\,\text{for all}\; x \in D},
\end{align}
we assume that
\begin{itemize}
\item[(i)] $0 \in \I_T^\circ$, where $\I_T^\circ$ denotes the interior of
           $\I_T$;
\item[(ii)] the conditional moment generating function of $X_t$ under $\P_x$
            has exponentially-affine dependence on $x$; that is, there exist
            functions $\phi_t(u):[0,T]\times\I_T\to\R$ and
            $\psi_t(u):[0,T]\times\I_T\to\R^d$ such that
\begin{align}\label{affine-def}
\E_x\big[\exp\langle u,X_t\rangle\big]
 = \exp\big( \phi_t(u) + \langle\psi_t(u),x\rangle \big),
\end{align}
for all $(t,u,x) \in [0,T] \times \I_T \times D$.
\end{itemize}
\end{asa}
\noindent Here $\langle\cdot,\cdot\rangle$ denotes the inner product on $\R^d$,
and $\E_x$ the expectation with respect to $\P_x$. The filtration $\bF$ is the
completed natural filtration of $X$.

The functions $\phi$ and $\psi$ satisfy the so-called \textit{generalized
Riccati equations}
\begin{subequations}\label{Riccati}
\begin{align}
\frac{\partial}{\partial t}\phi_t(u)
 &= F(\psi_t(u)),  \qquad \phi_0(u)=0, \label{Ric-1}\\
\frac{\partial}{\partial t}\psi_t(u)
 &= R(\psi_t(u)),  \qquad \psi_0(u)=u, \label{Ric-2}
\end{align}
\end{subequations}
for $(t,u)\in [0,T] \times \I_T$. The functions $F$ and $R$ are of \lk form,
that is
\begin{subequations}\label{F-R-def}
\begin{align}
F(u) &= \langle b,u\rangle +
     \int_D\big(\e^{\langle\xi,u\rangle}-1\rangle\big)m(\ud \xi),\\
R_i(u) &= \langle \beta_i,u\rangle
       + \Big\langle\frac{\alpha_i}2u,u\Big\rangle
       + \int_D\big(\e^{\langle\xi,u\rangle}-1-\langle
          u,h^i(\xi)\rangle\big)\mu_i(\ud \xi),
\end{align}
\end{subequations}
where $(b,m,\alpha_i,\beta_i,\mu_i)_{1\le i\le d}$ are \textit{admissible
parameters} and $h^i:\Rp^d\to\R^d$ are suitable truncation functions. We refer
to \cite{DuffieFilipovicSchachermayer03} for all the details.

We will later make use of the following results and definitions; here
inequalities involving vectors are interpreted component-wise.

\begin{lemma}\label{positivity}
The functions $\phi$ and $\psi$ satisfy the following:
\begin{enumerate}
\item $\phi_t(0) = \psi_t(0) = 0$ for all $t \in [0,T]$.
\item $\I_T$ is a convex set. Moreover, for each $t \in [0,T]$,
 the functions $\I_T \ni u \mapsto \phi_t(u)$ and
 $\I_T \ni u \mapsto \psi_t(u)$ are (componentwise) convex.
\item $\phi_t(\cdot)$ and $\psi_t(\cdot)$ are order-preserving: let
 $(t,u),(t,v)\in [0,T] \times \I_T$, with $u\le v$.
 Then
\begin{align}\label{order}
\phi_t(u)\le \phi_t(v)
 \quad\text{ and }\quad
\psi_t(u)\le \psi_t(v).
\end{align}
\item $\psi_t(\cdot)$ is \emph{strictly} order-preserving: let
 $(t,u),(t,v) \in [0,T] \times \I_T^\circ$, with $u < v$. Then
 $\psi_t(u) < \psi_t(v)$.
\end{enumerate}
\end{lemma}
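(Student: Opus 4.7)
The plan is to obtain parts (1)--(3) as direct consequences of the defining identity \eqref{affine-def} together with standard convexity and monotonicity properties of the exponential, and to handle part (4) via a comparison argument applied to the Riccati system \eqref{Ric-2}.

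For (1), substituting $u=0$ in \eqref{affine-def} reduces it to $1=\exp(\phi_t(0)+\langle\psi_t(0),x\rangle)$ for every $x\in\Rp^d$; choosing $x=0$ gives $\phi_t(0)=0$, and the remaining identity then forces $\psi_t(0)=0$. For (2), convexity of $\I_T$ comes from Hölder's inequality applied to $\e^{\langle\lambda u+(1-\lambda)v,X_t\rangle}$, and the map $u\mapsto\phi_t(u)+\langle\psi_t(u),x\rangle$ is the cumulant generating function of $\langle\cdot,X_t\rangle$ under $\P_x$, hence convex in $u$ for every fixed $x\in\Rp^d$; taking $x=0$ gives convexity of $\phi_t$, while taking $x=re_i$ with $r>0$, dividing by $r$ and sending $r\to\infty$ identifies $(\psi_t)_i$ as a pointwise limit of convex functions and hence convex. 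For (3), $u\le v$ combined with $X_t\in\Rp^d$ gives $\langle u,X_t\rangle\le\langle v,X_t\rangle$ pointwise, so
\[
\phi_t(u)+\langle\psi_t(u),x\rangle\le\phi_t(v)+\langle\psi_t(v),x\rangle\quad\text{for all } x\in\Rp^d;
\]
the choice $x=0$ yields $\phi_t(u)\le\phi_t(v)$, and rearranging gives $\langle\psi_t(v)-\psi_t(u),x\rangle\ge\phi_t(u)-\phi_t(v)$, so the linear functional $x\mapsto\langle\psi_t(v)-\psi_t(u),x\rangle$ is bounded below on $\Rp^d$, which forces each of its coefficients to be nonnegative.

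The real work is in (4). Setting $\eta(t):=\psi_t(v)-\psi_t(u)$ and using \eqref{Ric-2} together with the fundamental theorem of calculus, one obtains the linear non-autonomous ODE $\dot\eta(t)=A(t)\eta(t)$ with
\[
A(t):=\int_0^1 DR\bigl(\psi_t(u)+r\eta(t)\bigr)\,\ud r.
\]
The admissibility conditions for affine processes on $\Rp^d$ from \cite{DuffieFilipovicSchachermayer03} make $R$ quasi-monotone there, i.e.\ $\partial_{u_j}R_i\ge 0$ for $i\neq j$, so $A(t)$ is a Metzler matrix uniformly in $t\in[0,T]$. Choosing $c>0$ large enough that $A(t)+cI$ has all entries nonnegative on $[0,T]$ and setting $\eta_c(t):=\e^{ct}\eta(t)$, each component satisfies $\tfrac{d}{dt}(\eta_c)_i\ge 0$; since $\eta_c(0)=v-u>0$ componentwise by hypothesis, we conclude $(\eta_c(t))_i\ge(v-u)_i>0$, and hence $\eta(t)>0$ componentwise for every $t\in[0,T]$.

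The main obstacle is the verification of quasi-monotonicity of $R$: inspecting $\partial_{u_j}R_i(u)$ from \eqref{F-R-def} one needs the admissibility conditions that $(\beta_i)_j\ge 0$ for $j\neq i$, that $\alpha_i$ be supported on its $(i,i)$-entry, and that the jump measure $\mu_i$ be concentrated on $\Rp^d$ with the appropriate integrability of the truncation term. All of these are catalogued for affine processes on $\Rp^d$ in \cite{DuffieFilipovicSchachermayer03}, so no genuinely new analysis is required beyond invoking that framework; the real content of part (4) is the Metzler/exponential-shift argument sketched above.
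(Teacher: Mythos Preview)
The paper does not give a proof of this lemma; it is stated without argument as a collection of known facts about affine processes on $\Rp^d$, imported from \cite{KellerResselPapapantoleonTeichmann09}. So there is no proof in the present paper to compare against, and your proposal supplies what the paper omits.

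Your arguments for (1)--(3) are correct and are the standard ones. For (4), the cooperative-ODE/Metzler approach is the right idea and is essentially how this is proved in the source literature, but your write-up contains a small circularity: you claim $\tfrac{d}{dt}(\eta_c)_i\ge 0$ from $\dot\eta_c=(A(t)+cI)\eta_c$ with $A(t)+cI\ge 0$ entrywise, yet that inequality only follows once you already know $\eta_c\ge 0$. The fix is a continuity bootstrap: on the maximal interval $[0,t^*)$ on which $\eta_c>0$ one has $\dot\eta_c\ge 0$, hence $\eta_c(t)\ge\eta_c(0)=v-u>0$; by continuity $\eta_c(t^*)\ge v-u>0$, forcing $t^*=T$ and giving the strict inequality. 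You should also check that a uniform $c$ exists on $[0,T]$, which follows from continuity of $t\mapsto\psi_t(u),\psi_t(v)$ on the compact interval and local boundedness of $DR$ on $\I_T^\circ$.

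A further caution on the quasi-monotonicity step: $\partial_{u_j}R_i\ge 0$ for $i\neq j$ does not follow from the bare form \eqref{F-R-def} without specifying the truncation $h^i$ and the associated admissibility constraints for the state space $\Rp^d$. With a generic truncation the jump contribution $\int_D(\xi_j\e^{\langle\xi,u\rangle}-h^i_j(\xi))\,\mu_i(\ud\xi)$ can be negative when $u$ has negative components. The inequality does hold once one uses the admissible parametrisation of \cite{DuffieFilipovicSchachermayer03} for $\Rp^d$ (off-diagonal drift nonnegative after compensation, $\alpha_i$ supported on its $(i,i)$ entry, $\mu_i$ supported on $\Rp^d$), but in a full proof you should quote the precise condition rather than assert it is ``catalogued''.
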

\begin{definition}\label{def-gamma}
For any process \prozess[X] satisfying Assumption $(\mathbf{A})$,
define
\begin{align}
\gamma_X \,
 := \sup_{u\in\I_T\cap\R^d_{>0}} \E_{\mathbf{1}}\big[\e^{\scal{u}{X_T}}\big].
\end{align}
\end{definition}

Many results on affine processes can be extended to the time-inhomo\-ge\-ne\-ous
case, see \cite{Filipovic05}. The conditional moment generating function then
takes the form
\begin{align}\label{affine-conditional-inhomogeneous}
\E_x\left[\left.\exp\langle u,X_{r}\rangle\right|\F_s \right]
 = \exp\big( \phi_{s,r}(u) + \langle\psi_{s,r}(u),X_s\rangle \big),
\end{align}
for all $(s,r,u)$ such that $0 \le s \le r \le T$ and $u \in \I_T$, with
$\phi_{s,r}(u)$ and $\psi_{s,r}(u)$ now depending on both $s$ and $r$. Assuming
that $X$ satisfies the `strong regularity condition' (cf.~\cite{Filipovic05},
Definition~2.9), $\phi_{s,r}(u)$ and $\psi_{s,r}(u)$ satisfy generalized Riccati
equations with time-dependent right-hand sides:
\begin{align}
- \frac{\partial}{\partial s}\phi_{s,r}(u)
 &= F(s,\psi_{s,r}(u)),  \qquad \phi_{r,r}(u)=0, \label{Ric-TI-1}\\
- \frac{\partial}{\partial s}\psi_{s,r}(u)
 &= R(s,\psi_{s,r}(u)),  \qquad \psi_{r,r}(u)=u, \label{Ric-TI-2}
\end{align}
for all $0 \le s \le r \le T$ and $u \in \I_T$.

We close this section with an example of an affine process on $\Rp$ that has
already been used in the credit risk \cite{Duffie_Garleanu_2001} and term
structure modeling literature \cite{Filipovic_2001}.

\begin{example}
Let $X$ be a Cox--Ingersoll--Ross process with jumps, defined by the SDE
\begin{align*}
\ud X_t & = -\lambda (X_t - \theta) \dt + 2 \eta \sqrt{X_t} \ud W_t
          + \ud Z_t, \qquad X_0=x \in \R_{\geq 0},
\end{align*}
where $\lambda, \theta,  \eta \in \Rp$, $W$ is a Brownian motion and $Z$ is a
compound Poisson process with constant intensity $\ell$ and exponentially
distributed jumps with mean $\mu$. This is an affine process on $\Rp$ with
\begin{align*}
F(u) & = \lambda \theta u + \ell \frac{u}{\frac{1}{\mu} - u}, \\
R(u) & = 2 \eta^2 u^2 - \lambda u.
\end{align*}
The Riccati equations \eqref{Riccati} can be solved explicitly, and we get that
\begin{align*}
\phi_t (u) & = - \frac{\lambda \theta}{2 \eta^2}
                 \log \left(1 - 2 \eta^2 b(t) u\right) \\
 & \qquad - \frac{\ell \mu}{\lambda \mu - 2 \eta^2}  \log \left(
   \frac{\lambda (\mu u - 1)}{a(t) (\lambda \mu - 2 \eta^2) u -
         \lambda + 2 \eta^2 u} \right), \\
\psi_t(u) & = \frac{a(t) u}{1 - 2 \eta^2 b(t) u},
\end{align*}
with
\begin{align*}
a(t) & = \e^{-\lambda t} \qquad \textrm{and} \qquad
b(t) = \frac{1-\e^{-\lambda t }}{\lambda}.
\end{align*}
\end{example}

\subsection{Ordered Martingales $\geqslant1$}

The construction of the \alm is based on families of (parametrized) martingales
greater than one, which are increasing in some parameter. We follow here the
presentation of \cite{Papapantoleon10b}.

Let \prozess[X] be an affine process on $\Rp^d$ as described in the previous
section, where from now on we restrict ourselves to a finite time horizon, i.e.
$T<\infty$. Let $u\in\Rp^d$ and consider the random variable
$Y^u_T:=\e^{\scal{u}{X_T}}$. Then, from the tower property of conditional
expectations we know immediately that \prozess[M^u], where
\begin{align}\label{Pn-martingales}
M^u_t &= \E\big[\e^{\scal{u}{X_T}}|\F_t\big] \nonumber\\
      &= \exp\big( \phi_{T-t}(u) + \scal{\psi_{T-t}(u)}{X_t} \big),
\end{align}
is a martingale. Moreover, it is obvious that $M^u_t\ge1$ for all $t\in[0,T]$,
while the ordering
\begin{align}\label{M-order}
 u\le v \Longrightarrow M_t^u\le M_t^v,
\qquad \forall t\in[0,T],
\end{align}
also follows directly.

\subsection{Affine \lib models}
\label{new_approach}

Consider a discrete tenor structure $\mathcal{T}=\{0=T_0<T_1<\cdots<T_N\le T\}$
and let $\delta_k:=T_{k+1}-T_k$ for all $k\in\mathcal{K}\setminus\{N\}$, where
$\mathcal{K}:=\set{1,\dots,N}$. Let $B(\cdot,T_k)$ denote the price of a zero
coupon bond with maturity $T_k$ and $L(\cdot,T_k)$ denote the forward \lib rate
settled at $T_k$ and exchanged at $T_{k+1}$. They are related via
\begin{align}
L(t,T_k)
 = \frac{1}{\delta_k}\left(\frac{B(t,T_k)}{B(t,T_{k+1})}-1\right).
\end{align}
Denote by $\P_k$ the forward measure associated with the maturity $T_k$, i.e.
the bond $B(\cdot,T_k)$ is the numeraire, for all $k\in\mathcal{K}$. Assume that
$X$ is an affine process under $\P_N$ satisfying Assumption (\textbf{A}).

We know that discounted prices of traded assets (e.g. bonds) should be martingales with
respect to the terminal martingale measure, i.e.
\begin{align}
 \frac{B(\cdot,T_k)}{B(\cdot,T_N)} \in \mathcal{M}(\P_N),
 \qquad\text{for all}\;\, k\in\mathcal{K}.
\end{align}
The affine LIBOR ansatz is thus to model quotients of bond prices using the
$\P_N$-martingales $M^u$ as follows:
\begin{align}
\frac{B(t,T_1)}{B(t,T_N)} &= M_t^{u_1} \label{mg-2}\\
 &\vdots\nonumber\\
\frac{B(t,T_{N-1})}{B(t,T_N)} &= M_t^{u_{N-1}}, \label{mg-N}
\end{align}
for all $t\in[0,T_1],\dots,t\in[0,T_{N-1}]$ respectively, while the initial
values of the martingales $M^{u_k}$ must satisfy:
\begin{align}\label{initial-cond}
M^{u_k}_0
 = \exp\big(\phi_T(u_k) + \big\langle\psi_T(u_k),x\big\rangle\big)
 = \frac{B(0,T_k)}{B(0,T_N)},
\end{align}
for all $k\in\mathcal{K}$. Obviously we set $u_N=0\Leftrightarrow
M_0^{u_N}=\frac{B(0,T_N)}{B(0,T_N)}=1$.

We can show that under mild conditions on the underlying process $X$, an affine
LIBOR model can fit any given term structure of initial LIBOR rates through the
parameters $u_1,\dots,u_N$.

\begin{proposition}\label{initial-fit}
Suppose that $L(0,T_1), \dotsc, L(0,T_{N-1})$ is a tenor structure of
non-negative initial LIBOR rates, and let $X$ be a process satisfying Assumption
$(\mathbf{A})$, starting at the canonical value $\mathbf{1}$. The following
hold:
\begin{enumerate}
\item If $\gamma_X>B(0,T_1)/B(0,T_N)$, then there exists
 a decreasing sequence \linebreak
 $u_1\ge u_2\ge\dots\ge u_N =0$ in $\I_T\cap\Rp^d$, such that
 \begin{equation}\label{martingale-initial}
  M_0^{u_k} = \frac{B(0,T_k)}{B(0,T_N)},
   \qquad \text{for all}\;\, k\in\mathcal{K}.
 \end{equation}
 In particular, if $\gamma_X=\infty$, then the affine LIBOR model
 can fit \emph{any} term structure of non-negative initial LIBOR
 rates.
\item If $X$ is one-dimensional, the sequence
 $(u_k)_{k\in\mathcal{K}}$ is \emph{unique}.
\item If all initial LIBOR rates are positive, the sequence
 $(u_k)_{k\in\mathcal{K}}$ is \emph{strictly} decreasing.
\end{enumerate}
\end{proposition}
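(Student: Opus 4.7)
The plan is to reduce everything to a one-parameter intermediate-value argument along a ray in $\I_T\cap\Rp^d$. Non-negative LIBOR rates give $B(0,T_k)\ge B(0,T_{k+1})$, so the target ratios $r_k:=B(0,T_k)/B(0,T_N)$ are non-increasing with $r_N=1$ and maximum $r_1$. Writing
\begin{equation*}
f(u):=M_0^u=\exp\big(\phi_T(u)+\scal{\psi_T(u)}{\indik}\big),
\end{equation*}
the task \eqref{martingale-initial} becomes: produce a non-increasing sequence $u_1\ge\dots\ge u_N=0$ in $\I_T\cap\Rp^d$ with $f(u_k)=r_k$; note $f(0)=1=r_N$ by Lemma \ref{positivity}(1), which pins down $u_N$.

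Using the hypothesis $\gamma_X>r_1$, I would pick $\bar u\in\I_T^\circ\cap\R^d_{>0}$ with $f(\bar u)>r_1$; such $\bar u$ exists because monotone convergence along the ray $t\mapsto f(tu^*)$ lets one approximate any near-maximiser $u^*\in\I_T\cap\R^d_{>0}$ from inside $\I_T^\circ$ (using $0\in\I_T^\circ$ and convexity of $\I_T$, Lemma \ref{positivity}(2)). Define $g:[0,1]\to\R$ by $g(\lambda):=f(\lambda\bar u)$. The whole segment $[0,1]\cdot\bar u$ sits in $\I_T^\circ$, so $g$ is continuous with $g(0)=1$ and $g(1)>r_1$, and is strictly increasing: for $0\le\lambda_1<\lambda_2\le1$ the componentwise strict inequality $\lambda_1\bar u<\lambda_2\bar u$ (since $\bar u>0$) combined with Lemma \ref{positivity}(3)--(4) gives $\phi_T(\lambda_1\bar u)\le\phi_T(\lambda_2\bar u)$ and $\psi_T(\lambda_1\bar u)<\psi_T(\lambda_2\bar u)$. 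The intermediate value theorem then yields unique $\lambda_k\in[0,1]$ with $g(\lambda_k)=r_k$, and the ordering of the $r_k$ forces $\lambda_1\ge\dots\ge\lambda_N=0$; setting $u_k:=\lambda_k\bar u$ proves (1). When $\gamma_X=\infty$ the same argument works for arbitrarily large $r_1$, so every non-negative term structure is reachable.

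Parts (2) and (3) follow at once from this construction. For (3), strictly positive $L(0,T_k)$ yield strictly decreasing $r_k$, so strict monotonicity of $g$ gives strictly decreasing $\lambda_k$, and hence strictly decreasing $u_k=\lambda_k\bar u$ componentwise (as $\bar u>0$). For (2), when $d=1$ the map $f$ itself is strictly increasing on $\I_T^\circ\cap\Rp$ by Lemma \ref{positivity}(4), so $f(u_k)=r_k$ determines $u_k$ uniquely. The one technical point to watch is that the strict monotonicity of $\psi_T$ in Lemma \ref{positivity}(4) only holds on $\I_T^\circ$; this is precisely why I place $[0,1]\cdot\bar u$ inside $\I_T^\circ$ using convexity of $\I_T$ and the assumption $0\in\I_T^\circ$.
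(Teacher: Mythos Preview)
Your argument is correct and follows essentially the same route as the paper's (which is not proved there directly but is recalled in the proof of Proposition~\ref{default-initial-fit}, citing \cite{KellerResselPapapantoleonTeichmann09}): pick a direction $u_+\in\I_T\cap\Rp^d$ with $M_0^{u_+}>B(0,T_1)/B(0,T_N)$, parametrise the ray $\xi\mapsto M_0^{\xi u_+}$, and invoke the intermediate value theorem to hit each ratio $B(0,T_k)/B(0,T_N)$. Your only refinement is to force $\bar u\in\I_T^\circ$ so that Lemma~\ref{positivity}(4) gives \emph{strict} monotonicity along the ray, which cleanly handles parts (2) and (3); the paper instead argues (2) by noting that in $d=1$ any choice of ray direction yields the same parameters.
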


In this model, forward prices have the following form:
\begin{align}\label{mg-3}
1+\delta_k L(t,T_k)
 &= \frac{B(t,T_k)}{B(t,\tk[k+1])}
  = \frac{M_t^{u_k}}{M_t^{u_{k+1}}} \nonumber\\
 &= \exp\Big(\phi_{T_N-t}(u_k)- \phi_{T_N-t}(u_{k+1})\nonumber\\
 &\qquad\qquad
  + \big\langle\psi_{T_N-t}(u_k)-\psi_{T_N-t}(u_{k+1}),X_t\big\rangle\Big)
    \nonumber\\
 &= \exp\Big(A_{T_N-t}(u_k,u_{k+1})
     + \big\langle B_{T_N-t}(u_k,u_{k+1}),X_t\big\rangle\Big),
\end{align}
where we have defined
\begin{align}\label{def-A}
 A_{T_N-t}(u_k,u_{k+1})
 := \phi_{T_N-t}(u_k)- \phi_{T_N-t}(u_{k+1}),\\\label{def-B}
 B_{T_N-t}(u_k,u_{k+1})
 := \psi_{T_N-t}(u_k)- \psi_{T_N-t}(u_{k+1}).
\end{align}
Using Proposition \ref{initial-fit}(1) and Lemma \ref{positivity}(3), we
immediately deduce that \textit{\lib rates are always non-negative} in the
affine LIBOR models.

\begin{proposition}\label{libor-positiv}
Suppose that $L(0,T_1), \dotsc, L(0,T_{N-1})$ is a tenor structure of
non-negative initial LIBOR rates, and let $X$ be a process satisfying Assumption
$(\mathbf{A})$. Let the bond prices be modelled by \eqref{mg-2}--\eqref{mg-N}
and satisfying the initial conditions \eqref{initial-cond}. Then the LIBOR rates
$L(t,T_k)$ are \emph{non-negative} a.s., \emph{for all} $t\in[0,T_k]$ and
$k \in \mathcal{K}\setminus\{N\}$.
\end{proposition}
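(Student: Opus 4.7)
The plan is to read the non-negativity of $L(t,T_k)$ directly off the exponential-affine representation \eqref{mg-3} by checking that the exponent is a.s.\ non-negative. Since
\begin{align*}
1 + \delta_k L(t,T_k)
 = \exp\Big(A_{T_N-t}(u_k,u_{k+1})
   + \big\langle B_{T_N-t}(u_k,u_{k+1}), X_t \big\rangle\Big),
\end{align*}
it suffices to show that
\begin{align*}
A_{T_N-t}(u_k,u_{k+1}) \ge 0
\quad \text{and} \quad
\big\langle B_{T_N-t}(u_k,u_{k+1}), X_t \big\rangle \ge 0 \quad \text{a.s.}
\end{align*}
because this forces $1+\delta_k L(t,T_k)\ge 1$ and hence $L(t,T_k)\ge 0$.

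For the first inequality, I would invoke Proposition \ref{initial-fit}(1): under the stated hypotheses there exists a decreasing sequence $u_1 \ge u_2 \ge \dots \ge u_N = 0$ in $\I_T \cap \Rp^d$ such that the initial conditions \eqref{initial-cond} hold. In particular, $u_k \ge u_{k+1}$ (componentwise). By the order-preserving property of $\phi$ and $\psi$ from Lemma \ref{positivity}(3), applied with parameter $T_N - t$, it follows that
\begin{align*}
A_{T_N-t}(u_k,u_{k+1}) = \phi_{T_N-t}(u_k) - \phi_{T_N-t}(u_{k+1}) \ge 0,
\end{align*}
and similarly each coordinate of $B_{T_N-t}(u_k,u_{k+1}) = \psi_{T_N-t}(u_k) - \psi_{T_N-t}(u_{k+1})$ is non-negative.

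For the second inequality, I use the fact that $X$ takes values in $D = \Rp^d$ (Assumption $(\mathbf{A})$), so $X_t \ge 0$ componentwise a.s. Combined with $B_{T_N-t}(u_k,u_{k+1}) \ge 0$ from the previous step, the inner product $\langle B_{T_N-t}(u_k,u_{k+1}), X_t\rangle$ is a sum of products of non-negative terms and therefore non-negative a.s. Assembling the two bounds yields $L(t,T_k) \ge 0$ a.s.\ for all $t \in [0,T_k]$ and $k \in \mathcal{K}\setminus\{N\}$.

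There is no real obstacle here: the statement is essentially a bookkeeping consequence of the construction. The only thing to be mindful of is that Proposition \ref{initial-fit}(1) requires the existence of a \emph{decreasing} sequence of $u_k$'s in $\I_T \cap \Rp^d$ (which needs $\gamma_X > B(0,T_1)/B(0,T_N)$, implicit in the hypothesis that the initial fit has been performed), and that the state space $\Rp^d$ is what converts componentwise positivity of $B_{T_N-t}$ into positivity of the inner product.
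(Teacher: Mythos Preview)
Your proof is correct and follows essentially the same route as the paper, which simply cites Proposition~\ref{initial-fit}(1) and Lemma~\ref{positivity}(3) as immediately yielding the result. You have merely unpacked those two references in detail, making explicit the role of the state space $\Rp^d$ in turning the componentwise inequality $\psi_{T_N-t}(u_k)\ge\psi_{T_N-t}(u_{k+1})$ into non-negativity of the inner product.
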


Forward measures in the \alm are related to each other via quotients of the
martingales $M^u$; we have that
\begin{align}\label{Pk-to-next}
\frac{\dd \P_k}{\dd \P_{k+1}} \Big|_{\F_t}
  = \frac{B(0,\tk[k+1])}{B(0,\tk[k])} \cdot \frac{M_t^{u_{k}}}{M_t^{u_{k+1}}}
\end{align}
for any $k\in\mathcal{K}\setminus\{N\}, t\in[0,\tk[k]]$, where each $\P_k$ is
defined on $(\Omega, \cF_{T_k})$. In addition, the
density between the $\P_k$-forward measure and the terminal forward measure
$\P_N$ is given by the martingale $M^{u_k}$, as the defining equations
\eqref{mg-2}--\eqref{mg-N} clearly dictate; we have
\begin{align}\label{Pk-to-final}
\frac{\dd \P_k}{\dd \P_N} \Big|_{\F_t}
  = \frac{B(0,\tk[N])}{B(0,\tk[k])} \cdot \frac{B(t,\tk[k])}{B(t,\tk[N])}
  = \frac{M_t^{u_k}}{M_0^{u_k}}.
\end{align}
The connection between the terminal and forward measures yields also the
\textit{martingale property of forward LIBOR rates}. We have that
\begin{align}
1+\delta_k L(\cdot,T_k)
 = \frac{M^{u_k}}{M^{u_{k+1}}} \in \mathcal{M}(\P_{k+1})
\intertext{because}
\frac{M^{u_k}}{M^{u_{k+1}}} \cdot \frac{\dd\P_{k+1}}{\dd\P_N}\Big|_{\F_\cdot}
 = M^{u_k} \in \mathcal{M}(\P_N).
\end{align}

Finally, we want to show that the model structure is preserved under any forward
measure. Indeed, we calculate the conditional moment generating function
of $X_r$ under the forward measure $\P_k$, and get
\begin{align}\label{Pk-mgf}
&\E_k\big[\e^{\langle v,X_r\rangle}\big|\F_s\big] \nonumber\\
 &= \exp\Big(\phi_{r-s}(\psi_{T_N-r}(u_{k})+ v)
      - \phi_{r-s}(\psi_{T_N-r}(u_k))\nonumber\\
 &\qquad\qquad
      + \scal{\psi_{r-s}(\psi_{T_N-r}(u_{k})+ v)
      - \psi_{r-s}(\psi_{T_N-r}(u_k))}{X_s}\Big),
\end{align}
which yields that $X$ is a time-inhomogeneous \textit{affine process under any
forward measure} $\pt[k]$, for any $k\in\mathcal{K}$. In particular, setting
$s=0$, $r=t$, we get that
\begin{equation}\label{Pk-mgf-2}
\E_k\big[\e^{\langle v,X_{t}\rangle}\big] =
\exp\left(\phi^k_t(v) + \scal{\psi^k_t(v)}{x}\right)\;,
\end{equation}
where \begin{align}
\phi^k_t(v) &:= \phi_t(\psi_{T_N-t}(u_{k})+ v) - \phi_t(\psi_{T_N-t}(u_k)),
\label{Pk-mgf-3}\\
\psi^k_t(v) &:= \psi_t(\psi_{T_N-t}(u_{k})+ v) - \psi_t(\psi_{T_N-t}(u_k)).
\label{Pk-mgf-4}
\end{align}
This also shows that the measure change from $\pt[k]$ to $\pt$ is an exponential
tilting (or Esscher transformation).

The main advantage of this modeling framework is that the affine structure of
the driving process is preserved under any forward measure, which leads to
semi-analytical pricing formulas for caps and swaptions using Fourier
transforms. Moreover, in certain examples such as the CIR model, closed-form
solutions similar to the Black--Scholes formula can be derived for both caps and
swaptions; we refer to \cite{KellerResselPapapantoleonTeichmann09} for all
the details.

\section{The defaultable \alm}
\label{dalm-sec}

In this section, we enlarge the market by adding \textit{defaultable} bonds with
zero recovery and maturities $T_k\in\mathcal{T}$. These are corporate bonds, and
default risk means the risk of default of the corporation that issued the bond.
The promised payoff at maturity $T_k$ of such a bond is one currency unit, which
is received by the bondholder (thereafter: she) if default does not occur before
or at maturity. In case of default, she receives only a partial amount of the
promised payment depending on the recovery scheme that applies. Zero recovery
means that in case of default she receives zero at maturity.

We denote the default time by $\tau$. The time-$t$ price of a defaultable bond
with zero recovery, denoted by $B^{0}(t, T_{k})$, can be written as
\begin{align}\label{dalm-basic}
B^{0}(t, T_{k}) = \ifnodefault{t} \overline{B}(t, T_{k}),
\end{align}
where $\overline{B}(t,T_{k})$, $t\in[0,T_k]$, denotes the pre-default value of
the bond, which satisfies $\overline{B}(t,T_{k}) > 0$ and
$\overline{B}(T_{k},T_{k})=1$.

Let us introduce now a concept of defaultable forward LIBOR rates, by a
straightforward generalization of the definition of the forward LIBOR rate and
using pre-default bond prices instead of default-free ones. The following
definitions are taken from \cite{EberleinKlugeSchoenbucher06}; see also
\cite{Schoenbucher00}. For a detailed discussion of similar concepts and related
defaultable FRAs we refer to \cite[\S 14.1.4, p. 431]{BieleckiRutkowski02}.

\begin{definition}
We define the \emph{defaultable forward LIBOR rate} for the period
$[T_{k},T_{k+1}]$ prevailing at time $t\leq T_{k}$ by
setting
\begin{align}
\overline{L}(t, T_{k})
 := \frac{1}{\delta_{k}} \left(
     \frac{\overline{B}(t, T_{k})}{\overline{B}(t,T_{k+1})} - 1\right).
\end{align}
\end{definition}

\begin{definition}
The \emph{forward credit spreads} between default-free and defaultable LIBOR
rates are denoted by
\begin{align}\label{libor-spread}
S(t, T_{k}) := \overline{L}(t, T_{k})-L(t, T_{k}),
\end{align}
while the associated \emph{forward default intensities} are defined as
\begin{align}
\label{H-def}
H(t, T_{k}) := \frac{\overline{L}(t, T_{k})- L(t, T_{k})}
                    {1 + \delta_{k} L(t,T_{k})}, \qquad t \leq T_{k},
\end{align}
with the convention $H(t,T_{k})= H(T_{k},T_{k})$, for $t>T_{k}$.
\end{definition}

Consequently, in terms of bond prices the following holds:
\begin{align}
\label{H-via-bonds}
H(t, T_{k})
 &= \frac{1}{\delta_{k}} \left( \frac{\overline{B}(t, T_{k})}{B(t, T_{k})}
    \frac{B(t, T_{k+1})}{\overline{B}(t, T_{k+1})} - 1\right) \nonumber\\
\Leftrightarrow
1+\delta_k H(t, T_{k})
 &= \frac{\overline{B}(t, T_{k})}{\overline{B}(t, T_{k+1})}
    \cdot \frac{B(t, T_{k+1})}{B(t, T_{k})}.
\end{align}
Each defaultable LIBOR rate can be expressed via the default-free LIBOR rate
with the same tenor date and the corresponding default intensity as
\begin{align}\label{H-df-df}
1 + \delta_{k} \overline{L}(t, T_{k})
 &= (1 + \delta_{k} L(t,T_{k})) (1 + \delta_{k} H(t, T_{k})) \nonumber\\
\Leftrightarrow
1 + \delta_{k} H(t, T_{k})
 &= \frac{1 + \delta_{k} \overline{L}(t, T_{k})}{1 + \delta_{k} L(t,T_{k})}.
\end{align}

The aim of this work is to construct an analytically tractable framework for the
joint evolution of default-free and defaultable \lib rates, where the
requirement that riskier rates are higher than risk-free ones is respected, that
is
\begin{align}
L(t,T_k) \leq \overline{L}(t,T_k)
\qquad \forall t\in[0,T_k],\, \forall k\in\mathcal{K}\setminus\{N\}.
\end{align}
In order to fulfill the last requirement, we will follow the approach in
\cite{EberleinKlugeSchoenbucher06} and model default-free \lib rates and
forward credit spreads, or equivalently forward default intensities, as
non-negative processes. In order to have an analytically tractable framework, we
will extend the \alm to the defaultable setting, i.e. we will model
\begin{align}
 1+\delta_k H(\cdot,T_k)
\end{align}
such that: (i) it remains greater than one for all times, (ii) the model is free
of arbitrage and (iii) the dynamics are of exponential-affine form.

\subsection{The Cox construction of the default time}
\label{Cox}

Here we describe the classical Cox process construction of the default time,
which is modeled as the random time when an $\bF$-adapted process crosses an
independent trigger. This construction is also known as the canonical
construction and provides a very simple and intuitive method to define the
default event. It is widely used in credit risk modeling and the details can be
found in many sources; we refer to \cite{JeanblancRutkowski00},
\cite{BieleckiRutkowski02} and \cite{EberleinKlugeSchoenbucher06}.

Let $(\Omega,\cF,\P_N)$ be a complete probability space such that a process $X$
satisfying Assumption $(\mathbf{A})$ is defined on it. Let $\bF$ denote the
completed natural filtration of $X$, and assume that $\eta$ is a random variable
defined on $(\Omega,\cF,\P_N)$, independent of $\bF$ and exponentially
distributed with mean 1. Finally, let $\Gamma$ be an $\bF$-adapted,
right-continuous, non-decreasing process such that $\Gamma_{0}=0$ and
$\lim_{t\to\infty}\Gamma_{t}= \infty$.

\begin{remark}
Note that in order to define $\Gamma$ and $\eta$ with these properties one
typically begins with a probability space where $X$ is defined and then
considers an enlarged space, obtained as the product space of the underlying
probability space and another space supporting $\eta$. Here $(\Omega,\cF,\P_N)$
is assumed to be already large enough to support the random variable $\eta$
which is independent of $\bF$.
\end{remark}

Define a random time $\tau: \Omega \to \Rp$ by
$$
\tau:= \inf \set{t \in \Rp: \Gamma_{t} \geq \eta }.
$$
This random time is not an $\bF$-stopping time. Let us denote by
$\cD_{t}:=\sigma(\ifdefault{t}: t \geq 0)$ and set $\bD=(\cD_{t})_{t\geq 0}$.
Define the filtration $\bG = (\cG_{t})_{t\geq 0}$ by setting
$\cG_{t}:=\bigcap_{s>t} (\cF_{s} \vee \cD_{s})$. Obviously, the random time
$\tau$ is a $\bG$-stopping time.

The following property can be easily proved: for all $0 \leq s \leq T_{N}$
\begin{equation}
\label{hazard_process}
\P_N (\tau > s | \cF_{T_{N}})
 = \P_N (\tau > s | \cF_{s}) = \e^{- \Gamma_{s}}.
\end{equation}
Hence, the process $\Gamma$ is by definition the $\bF$-\emph{hazard process}
of the random time $\tau$. Moreover, \eqref{hazard_process} entails the
so-called $\cH$\textit{-hypothesis}, also known as the immersion property,
namely:
\begin{itemize}
\item[($\cH$)]  Every $ \bF$-local martingale is a $ \bG$-local martingale,
\end{itemize}
which is equivalent to the following statements (cf. \cite{BremaudYor78}):
\begin{itemize}
\item[($\cH$1)] For any $t$, the $\sigma$-fields $\cF_{T_{N}}$ and $\cG_{t}$ are
conditionally independent given $\cF_{t}$, i.e.
$$
\E_N[X Y_{t} | \cF_{t}] = \E_N[X | \cF_{t}] \E_N[Y_{t} | \cF_{t}],
$$
for any bounded $\cF_{T_{N}}$-measurable $X$  and  bounded $\cG_{t}$-measurable
$Y_{t}$.
\item[($\cH$2)] For every bounded  $\cF_{T_{N}}$-measurable $X$
$$
\E_N[X | \cG_{t}]= \E_N[ X | \cF_{t}].
$$
\end{itemize}

In the sequel, we shall use the following lemma  which provides an expression
for the conditional expectation with respect to the enlarged $\sigma$-algebras
$\cG_{s}$ in terms of $\cF_{s}$. The result is classical and can be found
e.g. in \cite{JeanblancRutkowski00} or \cite{BieleckiRutkowski02}.

\begin{lemma}
\label{cond_expectation}
Let $Y$ be an integrable, $\cF$-measurable random variable. Then for any $s \leq
t$
$$
\E_N[\ifnodefault{t} Y | \cG_{s}]
= \ifnodefault{s} \frac{\E_N[\ifnodefault{t} Y|\cF_{s}]}{\E_N[\ifnodefault{t}|
\cF_{s}]}.
$$
\end{lemma}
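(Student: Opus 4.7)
The plan is to verify the identity directly from the characterization of conditional expectation: show that both sides are $\cG_s$-measurable and integrate to the same value on every element of $\cG_s$. The right-hand side is $\cG_s$-measurable since $\ifnodefault{s}$ is $\cG_s$-adapted and the quotient of conditional expectations is $\cF_s \subset \cG_s$-measurable, so only the integral comparison needs work.

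The key structural input, which I would record first as an auxiliary fact, is the classical statement that every $\cG_s$-measurable random variable coincides on the survival event $\{\tau > s\}$ with some $\cF_s$-measurable one; equivalently, any $C \in \cG_s$ admits $A \in \cF_s$ with $C \cap \{\tau > s\} = A \cap \{\tau > s\}$. This follows by a monotone class argument applied to the natural generators of $\cG_s = \bigcap_{r > s}(\cF_r \vee \cD_r)$, namely sets of the form $B \cap \{\tau > u\}$ with $B \in \cF_s$ and $u \leq s$.

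With this in hand, I would observe that both sides of the claimed identity vanish on $\{\tau \leq s\}$: the left-hand side because $s \leq t$ forces $\ifnodefault{t} \leq \ifnodefault{s} = 0$ there, and the right-hand side because of the explicit factor $\ifnodefault{s}$. By the structural fact it therefore suffices to verify, for each $A \in \cF_s$,
\begin{equation*}
\E_N\big[\indik_A\,\ifnodefault{t}\, Y\big]
= \E_N\left[\indik_A \, \ifnodefault{s} \, \frac{\E_N[\ifnodefault{t} Y \mid \cF_s]}{\E_N[\ifnodefault{s} \mid \cF_s]}\right].
\end{equation*}
The left-hand side equals $\E_N\big[\indik_A \, \E_N[\ifnodefault{t} Y \mid \cF_s]\big]$ by the tower property. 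On the right, the quotient is $\cF_s$-measurable; pulling it outside and conditioning the leftover $\ifnodefault{s}$ on $\cF_s$ produces the factor $\E_N[\ifnodefault{s} \mid \cF_s] = \e^{-\Gamma_s}$ via \eqref{hazard_process}, which cancels the denominator and delivers the same value.

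The only substantive step is the structural lemma on $\cG_s$; the remainder is routine, requiring only the tower property together with \eqref{hazard_process}. Well-definedness of the ratio is automatic, since $\E_N[\ifnodefault{s} \mid \cF_s] = \e^{-\Gamma_s} > 0$ $\P_N$-almost surely.
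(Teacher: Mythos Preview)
The paper does not prove this lemma; it cites it as a classical result from Jeanblanc--Rutkowski and Bielecki--Rutkowski. Your argument---reduce to $\cF_s$-test sets via the trace identity $\cG_s\cap\{\tau>s\}=\cF_s\cap\{\tau>s\}$, then apply the tower property---is exactly the standard proof given in those references, so you have simply supplied what the paper omits.

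One point deserves explicit mention. Your displayed verification carries the denominator $\E_N[\ifnodefault{s}\mid\cF_s]$, whereas the lemma as printed has $\E_N[\ifnodefault{t}\mid\cF_s]$. Your version is the correct one: the printed identity fails already for $Y=1$, since it would force $\P_N(\tau>t\mid\cG_s)=\ifnodefault{s}$, and the paper's own application of the lemma in the proof of Lemma~\ref{lemma:H-and-tau} places the survival probability at the \emph{conditioning} time in the denominator. You have silently repaired a typo; it would be cleaner to say so. As a minor aside, your monotone-class sketch of the trace identity lists only generators of the form $B\cap\{\tau>u\}$ and does not address sets $B\cap\{\tau\le u\}$ or the right-continuous refinement $\bigcap_{r>s}(\cF_r\vee\cD_r)$, but since that structural fact is itself textbook material this does not undermine the argument.
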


We conclude this subsection with the following important remark.

\begin{remark}
In LIBOR modeling we consider the whole set of equivalent forward measures. Each
$\P_k$, $k \in \cK \setminus{\{N\}}$, was defined on $(\Omega, \cF_{T_k})$ via
\eqref{Pk-to-final}. We now extend this definition to the $\sigma$-algebra
$\cG_{T_k}$ using the same Radon-Nikodym derivative
$$
\frac{\dd \P_k}{\dd \P_N} = \frac{M_{T_k}^{u_k}}{M_0^{u_k}}.
$$
Using the $\mathcal{H}$-hypothesis, more precisely ($\cH$2), we have
$$
\frac{\dd \P_k}{\dd \P_N} \Big|_{\cG_t} = \frac{\dd \P_k}{\dd \P_N}
\Big|_{\cF_t} =
\frac{M_t^{u_k}}{M_0^{u_k}}.
$$
Moreover, it easily follows that $\Gamma$
is the $\bF$-hazard process of $\tau$ under \textit{all} measures $\P_k$,
$k\in\mathcal{K}$. Applying the abstract Bayes' rule and ($\cH$1) we obtain
\begin{align}\label{imp-rem}
\P_k(\tau > s | \cF_{s})
 &= \frac{ \E_N[ M^{u_{k}}_{T_{k}} \ifnodefault{s}| \cF_{s}] }{M^{u_{k}}_{s}}
  \nonumber\\\nonumber
 &= \frac{\E_N [M^{u_{k}}_{T_{k}}| \cF_{s}] \, \P_N(\tau > s | \cF_{s}) }
         {M^{u_{k}}_{s}} \\
 &= \P_N(\tau > s | \cF_{s}) = \e^{- \Gamma_{s}}.
\end{align}
\end{remark}

\subsection{No-arbitrage conditions: interplay between $H$ and $\tau$}

Before proceeding with the construction of the defaultable \alm, it is crucial
to realize that we cannot choose $H$ and $\tau$ arbitrarily. Here we follow the
argumentation of \cite{EberleinKlugeSchoenbucher06} closely; compare also
Section 4.2 in \cite{Grbac10} and in particular Proposition 4.4, Lemma 4.5 and
Remarks 4.3 and 4.6 for a discussion on the absence of arbitrage in this
framework.

Let us begin by inspecting the relationship between $\tau$ and $H$ that is
necessarily satisfied in an arbitrage-free defaultable model.

\begin{lemma}
\label{lemma:H-and-tau}
Let $H(\cdot,T_k)$, $k\in\mathcal{K}\setminus\{N\}$, be the forward default
intensities and $\tau$ the time of default. Then, in an arbitrage-free model we
have
\begin{equation}\label{H-and-tau}
1+ \delta_{k} H(t, T_{k})
 = \frac{\P_k ( \tau > T_k | \cF_{t} )}{\P_{k+1} ( \tau > T_{k+1} | \cF_{t} )}
 = \frac{\E_k [ \e^{-\Gamma_{T_{k}}}  | \cF_{t} ]}
        {\E_{k+1} [ \e^{-\Gamma_{T_{k+1}}} | \cF_{t} ]}.
\end{equation}
\end{lemma}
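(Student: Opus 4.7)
The plan is to derive an explicit formula for the pre-default bond price $\overline{B}(t,T_k)$ in terms of the hazard process $\Gamma$ under the forward measure $\P_k$, and then substitute into the bond-price expression \eqref{H-via-bonds} for $1+\delta_k H(t,T_k)$.

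The starting point is the no-arbitrage requirement that, for each $k\in\cK$, the discounted defaultable bond process $B^{0}(\cdot,T_k)/B(\cdot,T_k)$ is a $\bG$-martingale under the forward measure $\P_k$ (this is the standard absence-of-arbitrage condition in this setup; see \cite{EberleinKlugeSchoenbucher06}). Using $B^{0}(t,T_k)=\ifnodefault{t}\overline{B}(t,T_k)$, $\overline{B}(T_k,T_k)=B(T_k,T_k)=1$, together with the terminal value $B^{0}(T_k,T_k)=\ifnodefault{T_k}$, the martingale property gives
\begin{equation*}
\frac{\ifnodefault{t}\,\overline{B}(t,T_k)}{B(t,T_k)}
 \;=\; \E_k\!\left[\ifnodefault{T_k}\,\big|\,\cG_t\right].
\end{equation*}

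Next I would apply Lemma~\ref{cond_expectation} to rewrite the right-hand side in terms of $\cF_t$-conditional expectations, obtaining on the set $\{\tau>t\}$
\begin{equation*}
\overline{B}(t,T_k)
 \;=\; B(t,T_k)\,
       \frac{\P_k(\tau>T_k\,|\,\cF_t)}{\P_k(\tau>t\,|\,\cF_t)}.
\end{equation*}
Identity \eqref{imp-rem} supplies $\P_k(\tau>s\,|\,\cF_s)=\e^{-\Gamma_s}$ under every forward measure $\P_k$, and a further application of $(\cH 2)$ or of the tower property yields $\P_k(\tau>T_k\,|\,\cF_t)=\E_k[\e^{-\Gamma_{T_k}}\,|\,\cF_t]$. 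Hence
\begin{equation*}
\overline{B}(t,T_k)
 \;=\; B(t,T_k)\,\e^{\Gamma_t}\,\E_k\!\left[\e^{-\Gamma_{T_k}}\,\big|\,\cF_t\right].
\end{equation*}

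The conclusion is then immediate: writing the analogous formula for $\overline{B}(t,T_{k+1})$ under $\P_{k+1}$ and substituting both into the bond-price representation \eqref{H-via-bonds} of $1+\delta_k H(t,T_k)$, the factors $B(t,T_k)$, $B(t,T_{k+1})$ and $\e^{\Gamma_t}$ all cancel, leaving exactly the ratio $\E_k[\e^{-\Gamma_{T_k}}|\cF_t]/\E_{k+1}[\e^{-\Gamma_{T_{k+1}}}|\cF_t]$, which in turn equals $\P_k(\tau>T_k|\cF_t)/\P_{k+1}(\tau>T_{k+1}|\cF_t)$ by \eqref{imp-rem}. The only delicate point is to justify that the no-arbitrage condition really is the $\bG$-martingale property of $B^{0}(\cdot,T_k)/B(\cdot,T_k)$ under $\P_k$ (rather than some weaker variant), and to handle the pre-default/post-default decomposition cleanly so that the identity for $\overline{B}(t,T_k)$ is interpreted as an equality of pre-default values on $\{\tau>t\}$; once this is set up, everything else reduces to Bayes' rule and an invocation of \eqref{imp-rem}.
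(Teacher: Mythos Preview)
Your proposal is correct and follows essentially the same route as the paper: both start from the no-arbitrage pricing relation $\ifnodefault{t}\,\overline{B}(t,T_k)/B(t,T_k)=\E_k[\ifnodefault{T_k}\,|\,\cG_t]$ (the paper states it as the risk-neutral valuation formula, you as a $\bG$-martingale condition---equivalent here), then apply Lemma~\ref{cond_expectation} and \eqref{imp-rem} to obtain $\overline{B}(t,T_k)/B(t,T_k)=\e^{\Gamma_t}\,\P_k(\tau>T_k\,|\,\cF_t)$ on $\{\tau>t\}$, and finally plug into \eqref{H-via-bonds}. The only cosmetic difference is that the paper establishes the first equality in \eqref{H-and-tau} before invoking the tower property for the second, whereas you derive the hazard-process representation of $\overline{B}$ first and read off both equalities together.
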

\begin{proof}
On the one hand, the value of a defaultable bond at maturity is
$$
B^{0}(T_{k}, T_{k})
 = \ifnodefault{T_k} \overline{B}(T_{k}, T_{k})
 = \ifnodefault{T_k}.
$$
The time-$t$ price of a contingent claim with payoff $\ifnodefault{T_k}$ at
$T_{k}$, which we denote by $\pi_{t}(\ifnodefault{T_k})$, is given by the
risk-neutral valuation formula under the forward measure $\P_k$, i.e.
\begin{align}\label{def-0}
\pi_{t}(\ifnodefault{T_k})
 = B(t, T_{k}) \, \E_k [\ifnodefault{T_k} | \cG_{t}].
\end{align}
On the other hand, $B^{0}(t,T_{k})$ denotes the time-$t$ price of a defaultable
bond. Hence, in order to have a consistent and arbitrage-free model, it
should hold
\begin{align}\label{def-1}
B^{0}(t, T_{k})
 = \pi_{t}(\ifnodefault{T_k}).
\end{align}
Now, \eqref{dalm-basic}, \eqref{def-0}, \eqref{def-1} and Lemma
\ref{cond_expectation} yield the following equality
$$
\ifnodefault{t} \overline{B}(t, T_{k})
 = \ifnodefault{t} B(t, T_{k})
   \frac{\P_k(\tau > T_k | \cF_{t})}{\P_k(\tau > t | \cF_{t} )},
$$
and from \eqref{hazard_process} and \eqref{imp-rem}, we obtain
\begin{equation*}
\label{eq:forward-def-price}
\frac{\overline{B}(t, T_{k})}{B(t, T_{k})}
 = \e^{\Gamma_{t}} \P_k(\tau > T_k | \cF_{t})
\end{equation*}
on the set $\set{\tau > t}$, for every $k \in \cK$. Recalling
\eqref{H-via-bonds} yields the first equality in \eqref{H-and-tau}. Moreover,
using the tower property of conditional expectations and the properties of the
hazard process, we get
\begin{equation*}
\E_k [ \ifnodefault{T_k} | \cF_{t}]
  = \E_k [ \E_k [\ifnodefault{T_k} | \cF_{T_{k}}] | \cF_{t}]
  = \E_k [\e^{-\Gamma_{T_{k}}} | \cF_{t}],
\end{equation*}
which combined with the first equality in \eqref{H-and-tau} yields the second
one.
\end{proof}

Therefore, as soon as the default time $\tau$ is specified, equality
\eqref{H-and-tau} produces a formula for $H$ and vice versa. In the spirit of
\cite{EberleinKlugeSchoenbucher06}, we are going to ``reverse engineer'' the
problem; that is, we shall specify the processes $H(\cdot,T_k)$,
$k\in\mathcal{K}\setminus \{N\}$, satisfying certain conditions and then define
an $\bF$-adapted process $\Gamma$ such that the relation between $H$ and
$\Gamma$ given in \eqref{H-and-tau} is satisfied. Finally, using the Cox
construction, we know that a default time $\tau$ with $\bF$-hazard process
$\Gamma$ exists.

\begin{proposition}
\label{arbitrage-free-def-model}
Assume that the default intensities $H(\cdot,T_k)$, $k\in\mathcal{K} \setminus
\{N\}$, satisfy the following assumption:
\begin{equation}
\label{H-martingale}
\bigg( \prod_{i=0}^{k} \frac{1}{1+\delta_{i} H(t, T_{i})} \bigg)_{0 \leq t \leq
T_{k}} \in \cM(\P_{k+1})
\end{equation}
for every $k\in\mathcal{K} \setminus \{N\}$, with $H(t,T_{i})=H(T_{i},T_{i})$,
for $t > T_{i}$. Moreover, let $\Gamma$ be any $\bF$-adapted, right-continuous
and non-decreasing process such that
\begin{equation}
\label{Gamma-via-H}
\Gamma_{T_{k+1}}= \sum_{i=0}^{k} \ln (1+\delta_{i} H(T_{i}, T_{i})),
\end{equation}
for every $k=0, 1, \ldots, N-1$. Then equation \eqref{H-and-tau} is satisfied.
\end{proposition}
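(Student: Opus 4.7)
The plan is to first establish the second equality in \eqref{H-and-tau}, which is essentially algebraic and exploits the martingale hypothesis \eqref{H-martingale} directly, and then derive the first equality from the hazard-process identity inherited from the Cox construction.

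To handle the conditional expectations, I would exponentiate \eqref{Gamma-via-H} to obtain $\e^{-\Gamma_{T_{k+1}}} = \prod_{i=0}^k (1+\delta_i H(T_i, T_i))^{-1}$. The key observation is that, thanks to the convention $H(t, T_i) = H(T_i, T_i)$ for $t > T_i$, this terminal random variable coincides with the value at $t = T_k$ of the candidate martingale
$$N^{(k)}_t := \prod_{i=0}^k \frac{1}{1+\delta_i H(t, T_i)}$$
postulated in \eqref{H-martingale}. Applying the $\P_{k+1}$-martingale property on $[0, T_k]$ then yields
$$\E_{k+1}\!\left[\e^{-\Gamma_{T_{k+1}}} \,\big|\, \cF_t\right] = N^{(k)}_t, \qquad t\le T_k;$$
an analogous argument with $k$ replaced by $k-1$ under $\P_k$ gives $\E_k[\e^{-\Gamma_{T_k}} \mid \cF_t] = N^{(k-1)}_t$. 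Taking the ratio, all factors with index $i \leq k-1$ cancel, leaving exactly $1 + \delta_k H(t, T_k)$, which is the second equality of \eqref{H-and-tau}.

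For the first equality, I would invoke the fact, already recorded in \eqref{imp-rem}, that $\Gamma$ is the $\bF$-hazard process of $\tau$ under every $\P_k$, so that $\P_k(\tau > T_k \mid \cF_{T_k}) = \e^{-\Gamma_{T_k}}$. The tower property then yields $\P_k(\tau > T_k \mid \cF_t) = \E_k[\e^{-\Gamma_{T_k}} \mid \cF_t]$ and similarly for the $(k+1)$ pair; substituting into the identity from the previous paragraph closes the argument.

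The only real obstacle is bookkeeping: the martingale assumption \eqref{H-martingale} is stated on the interval $[0, T_k]$, whereas the random variable $\e^{-\Gamma_{T_{k+1}}}$ lives at the later date $T_{k+1}$. The resolution is that, by \eqref{Gamma-via-H}, $\e^{-\Gamma_{T_{k+1}}}$ is already $\cF_{T_k}$-measurable, so the identity $N^{(k)}_{T_k} = \e^{-\Gamma_{T_{k+1}}}$ matches the terminal value of the martingale and no extension of \eqref{H-martingale} beyond $T_k$ is required.
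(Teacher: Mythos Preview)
Your proposal is correct and follows essentially the same route as the paper: the paper's proof is the terse version of your computation, asserting only that ``it is easily checked that [\eqref{Gamma-via-H}] combined with the martingale property \eqref{H-martingale} yields \eqref{H-and-H}, which is exactly \eqref{H-and-tau}.'' You unpack precisely this verification, and in addition make explicit the passage from the second to the first equality in \eqref{H-and-tau} via the hazard identity \eqref{imp-rem} and the tower property, which the paper leaves implicit (it was established in the proof of Lemma~\ref{lemma:H-and-tau}).
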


\begin{remark}
Note that $\Gamma_{T_{k+1}}\in\cF_{T_{k}}$, thus using linear interpolation
between tenor dates $T_{k}$ and $T_{k+1}$ provides a suitable example for
$\Gamma$.
\end{remark}

\begin{proof}
We begin by noting that, in order to satisfy \eqref{H-and-tau}, it suffices to
specify the hazard process $\Gamma$ only at the tenor points $T_{k}$. Inserting
$t=T_{k}$ into \eqref{H-and-tau}, we get
$$
\E_{k+1} \big[\e^{-\Gamma_{T_{k+1}}} | \cF_{T_{k}}\big]
 = \e^{-\Gamma_{T_{k}}} \frac{1}{1+ \delta_{k} H(T_{k}, T_{k})}.
$$
This motivates us to define $\Gamma$ as in \eqref{Gamma-via-H} at tenor points
$T_k$. Now it is easily checked that this combined with the martingale property
\eqref{H-martingale} yields
\begin{equation}
\label{H-and-H}
1 + \delta_{k} H(t, T_{k})
 = \frac{\E_k \left[  \prod_{i=0}^{k-1}
         \frac{1}{1+\delta_{i} H(T_{i},T_{i})} | \cF_{t}\right]}
        {\E_{k+1} \left[ \prod_{i=0}^{k}
         \frac{1}{1+\delta_{i} H(T_{i},T_{i})} | \cF_{t}\right]},
\end{equation}
which is exactly \eqref{H-and-tau}.
\end{proof}

Finally, we will later make use of the following.
\begin{definition}
We denote by
\begin{align}\label{def-IH}
\ha(\cdot, T_{k}):= \prod_{i=0}^{k} \frac{1}{1+\delta_{i} H(\cdot, T_{i})},
\ \ \  k\in\mathcal{K}\setminus \{N\}.
\end{align}
\end{definition}

\subsection{Modeling default intensities}
\label{modeling-H}

Let us now turn our attention to the joint modeling of default-free and
defaultable LIBOR rates. Any model for this evolution should satisfy some very
basic requirements, dictated by economics, (mathematical) finance and practical
applications. In particular:
\begin{itemize}
\item credit spreads should be positive;
\item the model should be arbitrage-free;
\item dynamics should be analytically tractable.
\end{itemize}
Combining these requirements with the considerations from the previous
subsections, in order to have an arbitrage-free defaultable model that produces
positive credit spreads, the processes $\ha(\cdot,T_k)$,
$k\in\mathcal{K}\setminus\{N\}$, should satisfy the following requirements :
\begin{itemize}
\item[(\textbf{A1})]
 $1 + \delta_k H(\cdot, T_k)
  = \frac{\ha(\cdot, T_{k-1})}{\ha(\cdot,T_k)} \geq 1$,
\item[(\textbf{A2})]
 $\ha(\cdot, T_{k}) \in \cM(\P_{k+1})$.
\end{itemize}
Moreover, in order to have an analytically tractable model, we will employ the
\alm and extend it to the defaultable setting.

\begin{remark}
Note that (\textbf{A1}) immediately yields that $\ha(\cdot,T_{k})$ must be a
$[0,1]$-valued process. In addition, (\textbf{A2}) is equivalent to
\begin{itemize}
\item[(\textbf{A2$'$})]
  $\ha(\cdot, T_{k}) M^{u_{k+1}} \in \cM(\P_N)$ \quad $\forall k\in\mathcal{K}
\setminus \{N\}$,
\end{itemize}
as a consequence of the relation between forward measures \eqref{Pk-to-final}
and \cite[Prop.~III.3.8]{JacodShiryaev03}.
\end{remark}

\begin{proposition}
\label{H-affine-specification}
Assume that default-free \lib rates are modeled according to the \alm. Let
$(v_k)_{k\in\mathcal{K}}$  be a family of vectors in $\R^d$ such that $v_1
\leq u_1$ and
\begin{subequations}\label{hard-ass}
\begin{align}
\label{U4.1}
 \phi_{t}(v_k) - \phi_{t}(u_k) &\ge \phi_{t}(\vk) - \phi_{t}(\uk), \\
\label{U4.2}
 \psi_{t}(v_k) - \psi_{t}(u_k) &\ge \psi_{t}(\vk) - \psi_{t}(\uk),
\end{align}
\end{subequations}
for all $t\in[0,T_k]$ and all $k\in\mathcal{K}$. Define a family of
$\P_N$-martin\-ga\-les $M^{v_{k}}$, $k\in\mathcal{K}$, by
\begin{align}
\label{M-v-k}
M^{v_{k}}_{t}= \exp \Big( \phi_{T_{N}-t}(v_{k}) +
\scal{\psi_{T_{N}-t}(v_{k})}{X_{t}}  \Big), \quad t \leq T_k,
\end{align}
and model the $\mathbb{H}$-processes by setting
\begin{align}\label{model-IH}
\ha(t, T_{k}):=\frac{M^{v_{k+1}}_{t}}{M^{u_{k+1}}_{t}}, \qquad k\in\mathcal{K}
\setminus \{N\},
\quad t \leq T_k,
\end{align}
and $\ha(t, T_{k}) = \ha(T_k, T_{k})$, for $t > T_k$. Then the family
$\ha(\cdot, T_{k})$,  $k\in\mathcal{K}\setminus\{N\}$,  satisfies requirements
(\textbf{A1}) and (\textbf{A2}).
\end{proposition}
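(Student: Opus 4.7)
The strategy is to verify the two requirements \textbf{(A1)} and \textbf{(A2)} separately, exploiting the ratio structure $\ha(\cdot,T_k) = M^{v_{k+1}}/M^{u_{k+1}}$ built into the definition \eqref{model-IH}.

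Requirement \textbf{(A2)} is the easier half. By the Radon--Nikodym formula \eqref{Pk-to-final}, a process $Y$ is a $\P_{k+1}$-martingale iff the product $Y\cdot M^{u_{k+1}}$ is a $\P_N$-martingale (this is precisely the content of the remark establishing the equivalence of \textbf{(A2)} and \textbf{(A2$'$)}). But by the definition \eqref{model-IH} the product $\ha(\cdot,T_k)\cdot M^{u_{k+1}}$ telescopes to $M^{v_{k+1}}$, and the latter is a $\P_N$-martingale by construction, since \eqref{M-v-k} is of the canonical form \eqref{Pn-martingales} with parameter $v_{k+1}\in\I_T$.

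For \textbf{(A1)} I would substitute the explicit exponential forms of $M^{v_k},M^{u_k},M^{v_{k+1}},M^{u_{k+1}}$ into the ratio $\ha(\cdot,T_{k-1})/\ha(\cdot,T_k)$. The exponent decomposes as a deterministic part $\bigl(\phi_{T_N-t}(v_k)-\phi_{T_N-t}(u_k)\bigr)-\bigl(\phi_{T_N-t}(v_{k+1})-\phi_{T_N-t}(u_{k+1})\bigr)$, which is non-negative directly by \eqref{U4.1} evaluated at the time $T_N-t$, plus a linear functional of $X_t$ with coefficient vector $\bigl(\psi_{T_N-t}(v_k)-\psi_{T_N-t}(u_k)\bigr)-\bigl(\psi_{T_N-t}(v_{k+1})-\psi_{T_N-t}(u_{k+1})\bigr)$, which is componentwise non-negative by \eqref{U4.2} and therefore pairs non-negatively with $X_t\in D=\Rp^d$ (Assumption $(\mathbf{A})$). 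Exponentiating yields $\ha(\cdot,T_{k-1})/\ha(\cdot,T_k)\ge 1$, which is exactly \textbf{(A1)}.

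The whole argument is essentially bookkeeping; the only subtle point is that $T_N-t$ for $t\in[0,T_k]$ must lie within the range of times on which hypotheses \eqref{U4.1}--\eqref{U4.2} are assumed, and this is fine because they are postulated uniformly in $k$ (taking $k=N$ covers $t\in[0,T_N]$). As a supplementary observation, combining $v_1\le u_1$ with \eqref{U4.2} at $t=0$, which reads $v_k-u_k\ge v_{k+1}-u_{k+1}$, inductively gives $v_k\le u_k$ for every $k$; by the ordering \eqref{M-order} this yields $\ha(\cdot,T_k)\in[0,1]$, matching the $[0,1]$-valued assertion in the remark following requirement \textbf{(A1)}.
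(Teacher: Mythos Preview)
Your argument is correct and follows essentially the same route as the paper: \textbf{(A2)} via the equivalent \textbf{(A2$'$)} and the telescoping product $\ha(\cdot,T_k)M^{u_{k+1}}=M^{v_{k+1}}$, and \textbf{(A1)} via the sign of the exponent using \eqref{U4.1}--\eqref{U4.2} together with $X_t\in\Rp^d$. The paper merely presents the steps in a different order---first establishing $v_k\le u_k$ and $\ha\in[0,1]$ (your ``supplementary observation''), then the monotonicity \eqref{dalm-ineq}---but the substance is identical.
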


\begin{proof}
The specification obviously satisfies condition (\textbf{A2$'$}), or
equivalently (\textbf{A2}), i.e. $\ha(\cdot, T_{k})$ is a $\P_{k+1}$-martingale.
Let us show that it fulfills (\textbf{A1}). Firstly, by inserting $t=0$ into
\eqref{U4.2} and recalling \eqref{Ric-2}, we get
$$
v_k - u_k \geq v_{k+1} - u_{k+1},
$$
for all $k\in\mathcal{K}$.  Secondly, since $v_1 \leq u_1$ by assumption, it
follows $v_k \leq u_k$, for all $k\in\mathcal{K}$. Thus, we obtain
\begin{align}\label{IH-1}
0 \le \ha(t,T_k) \le 1, \qquad \forall k, \forall t.
\end{align}
Moreover,  \eqref{U4.1} and \eqref{U4.2} yield
\begin{align}\label{dalm-ineq}
\frac{M^{v_{k}}_{t}}{M^{u_{k}}_{t}} \geq
 \frac{M^{v_{k+1}}_{t}}{M^{u_{k+1}}_{t}}
\end{align}
for all $k\in\mathcal{K}$ and $t\in[0,T_{k}]$,
which is equivalent to
\begin{align}\label{IH-0}
\ha(t, T_{k}) \geq \ha(t, T_{k+1}), \qquad \forall k, \forall t.
\end{align}
\end{proof}

\begin{remark}
Let us briefly comment on the financial interpretation of the conditions on the
families $(u_k)$ and $(v_k)$, before we proceed with discussing some properties
of the defaultable \alm. These families satisfy the following conditions:
\begin{itemize}
\item[(C1)] $u_k\ge\uk$ for all $k\in\mathcal{K} \setminus \{N\}$,
            where $u_k\in\mathcal{I}_T\cap\Rp^d$ and $u_N=0$
\item[(C2)] $v_k\ge\vk$ for all $k\in\mathcal{K} \setminus \{N\}$, where
$v_k\in\mathcal{I}_T$
\item[(C3)] $u_k\ge v_k$ for all $k\in\mathcal{K}$
\item[(C4)] the functions $\phi$ and $\psi$ satisfy
 \begin{align}
  \tag{C4.a}%\label{U4.1}
   \phi_{t}(v_k) - \phi_{t}(u_k) &\ge \phi_{t}(\vk) - \phi_{t}(\uk), \\
  \tag{C4.b}%\label{U4.2}
   \psi_{t}(v_k) - \psi_{t}(u_k) &\ge \psi_{t}(\vk) - \psi_{t}(\uk).
 \end{align}
\end{itemize}
Note that condition (C2), which was not stated explicitly above, follows by
combining (C4.b) for $t=0$ and (C1). The first condition ensures that
default-free LIBOR rates are \emph{non-negative}, while the second one ensures
that defaultable LIBOR rates are \emph{non-negative}, cf. \eqref{mg-3} and
\eqref{defaultable-Libor}, respectively. The third condition ensures that the
processes $\ha(\cdot,T_k)$ are $[0,1]$-valued, while the last condition ensures
that forward default intensities $H$ are \emph{non-negative} (cf.
\eqref{model-IH}), thus the spreads between default-free and defaultable \lib
rates are also non-negative. Note that (C4) ensures also that the hazard process
$\Gamma$ defined by \eqref{Gamma-via-H} is non-decreasing. The first three
conditions are automatically satisfied for any defaultable \alm by fitting the
initial term structure of default-free and defaultable rates. The last condition
has to be imposed in addition; it is automatically satisfied, for example, for
independent affine processes, see Section \ref{example}.
\end{remark}

\subsection{Properties of the model}

Next, we show that under mild conditions on the driving affine process the
defaultable \alm can fit any initial term structure of defaultable rates. This
result also shows that conditions (C2) and (C3) are automatically satisfied.

\begin{proposition}\label{default-initial-fit}
Assume that the setting of Proposition \ref{initial-fit} is in force. Suppose
that $\overline{B}(0,T_1) \geq \overline{B}(0,T_2) \geq \cdots\geq
\overline{B}(0,T_N)$ is a tenor structure of initial defaultable bond prices
such that $\overline{B}(0,T_k) \leq B(0,T_k)$ for every $k\in\mathcal{K}$, as
well as $\overline{L}(0,T_k) \geq L(0,T_k)$, i.e.
$$
\frac{\overline{B}(0,T_k)}{\overline{B}(0,T_{k+1})} \geq
\frac{B(0,T_k)}{B(0,T_{k+1})}.
$$
Let $X$ be a process satisfying Assumption $(\mathbf{A})$, starting at the
canonical value $\mathbf{1}$. The following hold:
\begin{enumerate}
\item If $\gamma_X> B(0,T_1)/B(0,T_N)$, then there exists
 a decreasing sequence
 $v_1\ge v_2 \ge \dots\ge v_N$ in $\I_T$, such that
 \begin{equation}\label{martingale-initial-v}
  M_0^{v_k} = \frac{\overline{B}(0,T_{k})}{B(0,T_{N})},
   \qquad \text{for all }\;k\in\mathcal{K}.
 \end{equation}
 In particular, if $\gamma_X=\infty$, then the defaultable affine LIBOR model
 can fit \emph{any} term structure of non-negative initial defaultable LIBOR
 rates. Moreover, for each $k\in\mathcal{K}$ it holds:
 $v_{k}\leq u_{k}$.
\item If $X$ is one-dimensional, the sequence
 $(v_k)_{k\in\mathcal{K}}$ is \emph{unique}.
\item If all initial defaultable LIBOR rates are positive, then the sequence
 $(v_k)_{k\in\mathcal{K}}$ is \emph{strictly} decreasing.
\end{enumerate}
\end{proposition}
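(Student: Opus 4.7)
The plan is to construct the sequence $(v_k)$ by backward induction on $k$, imitating the proof of Proposition \ref{initial-fit} but with the additional constraint $v_k\le u_k$ that ties the new sequence to the one already in hand. The main tool at each step is an intermediate-value argument along a line segment in $\I_T$, exploiting continuity of the map $v\mapsto M_0^v=\exp(\phi_{T_N}(v)+\langle\psi_{T_N}(v),\mathbf 1\rangle)$ together with the order-preservation of $\phi_{T_N}$ and $\psi_{T_N}$ (Lemma \ref{positivity}(3)) and the convexity of $\I_T$ (Lemma \ref{positivity}(2)).

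For the base case $k=N$ the target $\overline{B}(0,T_N)/B(0,T_N)$ lies in $(0,1]$. Since $X_t$ takes values in $\Rp^d$, the entire half-space $-\Rp^d$ is contained in $\I_T$ (because $\exp\langle v,X_T\rangle\le 1$ when $v\le 0$). Along the ray $t\mapsto -t\mathbf 1$ the map $t\mapsto M_0^{-t\mathbf 1}$ is continuous, equals $1$ at $t=0$, and decreases monotonically to $\P_{\mathbf 1}(X_T=0)$ by dominated convergence, so the IVT yields $v_N\le 0=u_N$ with the prescribed value. For the inductive step, suppose $v_{k+1}\le u_{k+1}\le u_k$; parametrize the segment $\sigma(\lambda)=(1-\lambda)v_{k+1}+\lambda u_k$, which lies in $\I_T$ by convexity and is pointwise nondecreasing in $\lambda\in[0,1]$. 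Then $\lambda\mapsto M_0^{\sigma(\lambda)}$ is continuous and nondecreasing, running from $\overline{B}(0,T_{k+1})/B(0,T_N)$ to $B(0,T_k)/B(0,T_N)$. The hypotheses $\overline{B}(0,T_k)\ge\overline{B}(0,T_{k+1})$ and $\overline{B}(0,T_k)\le B(0,T_k)$ place the target $\overline{B}(0,T_k)/B(0,T_N)$ inside this interval, so the IVT produces $\lambda^*$ and one sets $v_k:=\sigma(\lambda^*)$; automatically $v_{k+1}\le v_k\le u_k$, closing the induction.

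Part (2) is immediate from Lemma \ref{positivity}(4): in one dimension $v\mapsto M_0^v$ is strictly increasing on $\I_T^\circ$, so the defining equation has at most one solution. For (3), positivity of $\overline L(0,T_k)$ forces $\overline{B}(0,T_k)>\overline{B}(0,T_{k+1})$ and hence $M_0^{v_k}>M_0^{v_{k+1}}$, ruling out $v_k=v_{k+1}$; to upgrade to componentwise strict decrease one arranges $\lambda^*>0$ in the construction, whence $v_k=v_{k+1}+\lambda^*(u_k-v_{k+1})>v_{k+1}$ as soon as $u_k>v_{k+1}$ componentwise, a property inherited from Proposition \ref{initial-fit}(3) and the monotonicity of $(u_k)$.

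I expect the most delicate points to be (i) the base case, where reaching the prescribed value of $M_0^{v_N}$ requires either a mild non-degeneracy such as $\P_{\mathbf 1}(X_T=0)=0$ or an explicit hypothesis analogous to $\gamma_X>B(0,T_1)/B(0,T_N)$ transferred to the defaultable side; and (ii) the component-wise strict decrease in (3) when $d>1$, which only goes through if the inductive construction admits $\lambda^*\in(0,1]$ and the strict componentwise inequality $u_k>v_{k+1}$ is available.
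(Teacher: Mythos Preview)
Your approach is correct but takes a genuinely different route from the paper. The paper does not proceed by backward induction with varying segments; instead it fixes the single direction $u_+\in\I_T\cap\R^d_{>0}$ already used in the proof of Proposition~\ref{initial-fit}, picks $\lambda<0$ with $M_0^{\lambda u_+}<\overline B(0,T_N)/B(0,T_N)$, and then places \emph{all} the $v_k$ on the one-parameter family $\{\xi u_+:\xi\in[\lambda,1]\}$ by applying the intermediate value theorem to the continuous increasing map $\xi\mapsto M_0^{\xi u_+}$. Since the $u_k$ from Proposition~\ref{initial-fit} also lie on this ray, the inequality $v_k\le u_k$ reduces to a scalar comparison $\eta_k\le\xi_k$, which follows immediately from $\overline B(0,T_k)\le B(0,T_k)$.

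The ray construction buys simplicity in parts (2) and (3): uniqueness in one dimension is automatic because the ray is the whole parameter space, and strict componentwise decrease follows directly from $\eta_k>\eta_{k+1}$ together with $u_+>0$, with no further bookkeeping. Your segment construction also works, but your justification of (3) is slightly off: you invoke Proposition~\ref{initial-fit}(3), which assumes positive default-\emph{free} rates, whereas the hypothesis here concerns only defaultable rates. The gap is repairable by carrying the stronger inductive hypothesis ``$u_k>v_{k+1}$ componentwise'' and treating the endpoint case $v_{k+1}=u_{k+1}$ separately (there $\overline B(0,T_{k+1})=B(0,T_{k+1})$, and the standing assumptions $\overline L(0,T_k)\ge L(0,T_k)$ and $\overline B(0,T_k)\le B(0,T_k)$ force $L(0,T_k)=\overline L(0,T_k)>0$, hence $u_k>u_{k+1}=v_{k+1}$). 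Both proofs share the same implicit need that $\inf_{v\le 0}M_0^v$ undershoots the target at $k=N$; the paper simply asserts this infimum is zero.
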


\begin{remark}
Note that since $\overline{B}(0,T_1)\leq B(0,T_1)$ by assumption, it follows
that as soon as $\gamma_X$ satisfies condition (1) above, it will automatically
follow that $\gamma_X> \overline{B}(0,T_1)/B(0,T_N)$.
\end{remark}

\begin{proof}
We have that the tenor structure of initial defaultable bond prices satisfies
$$
\ha(0, T_k) = \prod_{i=0}^{k} \frac{1}{1+\delta_k H(0,
T_i)}=\frac{\overline{B}(0,T_{k+1})}{B(0, T_{k+1})} \leq 1.
$$
Recalling that $M^{u_{k+1}}_{0} = \frac{B(0,T_{k+1})}{B(0, T_N)}$ we obtain
$$
\ha(0, T_k) M^{u_{k+1}}_{0} = \frac{\overline{B}(0,T_{k+1})}{B(0, T_{k+1})}
\frac{B(0,T_{k+1})}{B(0, T_N)} = \frac{\overline{B}(0,T_{k+1})}{B(0, T_N)}.
$$
Note that by assumption
\begin{equation}
\label{inequalities}
\frac{\overline{B}(0,T_{1})}{B(0, T_N)} \geq \frac{\overline{B}(0,T_{2})}{B(0,
T_N)} \geq \cdots \geq \frac{\overline{B}(0,T_{N})}{B(0, T_N)} > 0,
\end{equation}
where the last term $\frac{\overline{B}(0,T_{N})}{B(0, T_N)} \leq 1$. Therefore,
similarly to the proof of Proposition \ref{initial-fit} (cf. Proposition 6.1 in
\cite{KellerResselPapapantoleonTeichmann09}), we can find a decreasing
sequence $(v_k)_{k \in \cK}$ such that
$$
M^{v_{k}}_{0} = \frac{\overline{B}(0,T_{k})}{B(0, T_N)}.
$$
More precisely, take $u_{+}$ as defined therein: let $u_{+}\in\I_{T}\cap\Rp^d$
be such that
$$
\E_{\mathbf{1}}[\e^{\langle u_{+}, X_T \rangle}] > \gamma_X - \varepsilon >
\frac{B(0,T_{1})}{B(0, T_N)},
$$
where $\varepsilon > 0$ is small enough such that $\gamma_X - \varepsilon >
\frac{B(0,T_{1})}{B(0, T_N)}$. Note that $u_{+}$ must exist by definition of
$\gamma_X$. Similarly, since $\inf_{v \in \Rm^{d}}
\E_{\mathbf{1}}[\e^{\langle v, X_T \rangle}]=0$, we can find some $\lambda<0$ such that
$$
\E_{\mathbf{1}}\Big[\e^{\langle \lambda u_{+}, X_T \rangle}\Big] <
\frac{\overline{B}(0,T_{N})}{B(0, T_N)} \leq 1.
$$
We consider the function $f$ defined in the aforementioned proposition and
extend its domain to the interval $[\lambda, 1]$, i.e. we define
$$
f:[\lambda, 1] \to \Rp, \qquad \xi \mapsto f(\xi)=\E_{\mathbf{1}}\Big[\e^{\langle \xi
u_{+}, X_T
\rangle}\Big] = M_{0}^{\xi u_{+}}.
$$
This function was already shown to be continuous and increasing. Moreover,
$f(\lambda) < \overline{B}(0,T_{N})/ {B(0, T_N)}$ and $f(1) >
\overline{B}(0,T_{1})/ {B(0, T_N)}$,  since $f(1) >B(0,T_1)/B(0,T_N)$ by
definition of $u_+$.
Thus,
there exists a sequence $\lambda<\eta_N\leq\cdots\leq\eta_1 < 1$ such that
$$
f(\eta_k)=M_{0}^{\eta_k u_{+}} = \frac{\overline{B}(0,T_{k})}{B(0, T_N)}, \quad
k \in \cK.
$$
Setting $v_k:= \eta_k u_{+}$ we obtain the desired decreasing sequence. Note
that as soon as there exists $k_{0}$ such that
$\frac{\overline{B}(0,T_{k_{0}})}{B(0, T_N)} <1$, it follows that  $v_k \in
\Rm^d$, for all $k\geq k_{0}$.

Moreover, we have that $v_k \leq u_k$, since
$\frac{\overline{B}(0,T_k)}{B(0,T_N)} \leq \frac{B(0,T_k)}{B(0, T_N)}$.

If $X$ is one-dimensional, then any choice of $u_{+}$ and $\lambda$ leads to the
same parameters $v_{k}$, which shows (2).

Finally, if the initial defaultable LIBOR rates are positive, inequalities in
\eqref{inequalities} become strict and thus the sequence $(v_k)$ becomes
strictly decreasing (see again Proposition \ref{initial-fit}).
\end{proof}

\begin{remark}
Note that from the assumption
$\frac{\overline{B}(0,T_{k})}{\overline{B}(0,T_{k+1})} \geq
\frac{B(0,T_{k})}{B(0,T_{k+1})}$, it follows directly that
\begin{align}\label{initial-data}
 \frac{M^{v_{k}}_{0}}{M^{u_{k}}_{0}} \geq
\frac{M^{v_{k+1}}_{0}}{M^{u_{k+1}}_{0}}.
\end{align}
Using \eqref{initial-cond} we get that
\begin{align}\label{rem-ini-data}
\phi_{T_N}(v_k)  - \phi_{T_N}(u_k)
&+ \big\langle\psi_{T_N}(v_k) - \psi_{T_N}(u_k),\mathbf{1}\big\rangle
\geq \nonumber \\
\phi_{T_N}(v_{k+1}) - \phi_{T_N}(u_{k+1})
&+ \big\langle\psi_{T_N}(v_{k+1}) - \psi_{T_N}(u_{k+1}),\mathbf{1}\big\rangle.
\end{align}
which agrees with \eqref{hard-ass}.
\end{remark}

Obviously the defaultable \alm inherits many properties from its default-free
counterpart: the defaultable rates are non-negative and the dynamics have an
exponential-affine structure.

\begin{lemma}
The defaultable LIBOR rate $\overline{L}(\cdot, T_{k})$ has the following form
\begin{align}\label{defaultable-Libor}
1+\delta_{k}\overline{L}(t, T_{k})
 &= \frac{M^{v_{k}}_{t}}{M^{v_{k+1}}_{t}} \\\nonumber
 &= \exp\Big(A_{T_N-t}(v_k,v_{k+1})
     + \big\langle B_{T_N-t}(v_k,v_{k+1}),X_t\big\rangle\Big) \ge1,
\end{align}
for all $T_{k}\in\mathcal{T}$ and $t \leq T_{k}$, where $A$ and $B$ are defined
in \eqref{def-A}--\eqref{def-B}.
\end{lemma}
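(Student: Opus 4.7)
The plan is to chain together three identities: the defining ratio of pre-default bond prices for $\overline{L}(t,T_k)$, the factorization \eqref{H-df-df} linking $\overline{L}$, $L$ and $H$, and the affine parameterizations of $1+\delta_k L(t,T_k)$ and $\mathbb{H}(t,T_k)$ that the defaultable affine LIBOR model postulates. The expected structure is a short algebraic manipulation in which the default-free factors $M^{u_k}, M^{u_{k+1}}$ cancel out, leaving only the $v$-indexed martingales.

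First, I would combine \eqref{H-df-df} with \eqref{mg-3} to write
\begin{align*}
1+\delta_{k}\overline{L}(t,T_{k})
 = \bigl(1+\delta_{k}L(t,T_{k})\bigr)\bigl(1+\delta_{k}H(t,T_{k})\bigr)
 = \frac{M^{u_{k}}_{t}}{M^{u_{k+1}}_{t}}\,\bigl(1+\delta_{k}H(t,T_{k})\bigr).
\end{align*}
Next, condition \textbf{(A1)} together with the model specification \eqref{model-IH} gives
\begin{align*}
1+\delta_{k}H(t,T_{k})
 = \frac{\mathbb{H}(t,T_{k-1})}{\mathbb{H}(t,T_{k})}
 = \frac{M^{v_{k}}_{t}/M^{u_{k}}_{t}}{M^{v_{k+1}}_{t}/M^{u_{k+1}}_{t}}.
\end{align*}
Substituting and cancelling the $M^{u_{k}}_{t}$ and $M^{u_{k+1}}_{t}$ factors yields the first equality $1+\delta_{k}\overline{L}(t,T_{k}) = M^{v_{k}}_{t}/M^{v_{k+1}}_{t}$. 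The second equality then follows by inserting the defining formula \eqref{M-v-k} for $M^{v_{k}}_{t}$ and $M^{v_{k+1}}_{t}$ and grouping terms according to the notation \eqref{def-A}--\eqref{def-B}, applied to the pair $(v_{k},v_{k+1})$ in place of $(u_{k},u_{k+1})$.

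Finally, for the inequality $\ge 1$ I would show that the exponent is non-negative. Setting $t=0$ in \eqref{U4.2} and using the initial condition $\psi_{0}(u)=u$ from \eqref{Ric-2} gives $v_{k}-u_{k}\ge v_{k+1}-u_{k+1}$; combined with the ordering $u_{k}\ge u_{k+1}$ from Proposition \ref{initial-fit}(1), this yields $v_{k}\ge v_{k+1}$ componentwise. By Lemma \ref{positivity}(3), the functions $\phi_{T_N-t}$ and $\psi_{T_N-t}$ are order-preserving on $\I_{T}$, so $A_{T_N-t}(v_{k},v_{k+1})\ge 0$ and $B_{T_N-t}(v_{k},v_{k+1})\ge 0$ (componentwise). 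Since $X_{t}$ takes values in $\R_{\ge 0}^{d}$, the inner product $\langle B_{T_N-t}(v_{k},v_{k+1}),X_{t}\rangle$ is non-negative, hence the exponent is non-negative and the claimed bound follows.

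There is no real obstacle here: the proof is essentially bookkeeping, with the only care needed being that the cancellation of $u$-indexed terms is consistent with the definition of $\mathbb{H}$ and that the componentwise ordering $v_{k}\ge v_{k+1}$ used in the final step is the one derived inside the proof of Proposition \ref{H-affine-specification} rather than the weaker assumption $v_{1}\le u_{1}$ appearing in the hypotheses.
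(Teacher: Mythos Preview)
Your proof is correct and follows essentially the same approach as the paper's: the paper also first writes $1+\delta_k H(t,T_k) = \frac{M^{v_k}_t}{M^{u_k}_t}\frac{M^{u_{k+1}}_t}{M^{v_{k+1}}_t}$ from \eqref{model-IH}, then uses \eqref{H-df-df} and \eqref{mg-3} to cancel the $u$-indexed martingales, and finally invokes Lemma~\ref{positivity}(3) together with the ordering $v_k\ge v_{k+1}$ (stated there as condition (C2)) for the inequality. Your version simply spells out the derivation of $v_k\ge v_{k+1}$ and the non-negativity of the exponent in more detail than the paper does.
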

\begin{proof}
We have that
\begin{align}\label{def-intensity-affine}
1+\delta_{k}H(t, T_{k})
& = \frac{\ha(t, T_{k-1})}{\ha(t, T_{k})}
  = \frac{M^{v_{k}}_{t}}{M^{u_{k}}_{t}} \frac{M^{u_{k+1}}_{t}}{M^{v_{k+1}}_{t}}.
\end{align}
Using \eqref{mg-3} and \eqref{H-df-df} we deduce
\begin{align*}
1+\delta_{k}\overline{L}(t, T_{k})
 &= (1+\delta_{k}L(t, T_{k}))(1+\delta_{k}H(t, T_{k}))\\
 &= \frac{M^{u_{k}}_{t}}{M^{u_{k+1}}_{t}} \cdot
    \frac{M^{v_{k}}_{t}}{M^{u_{k}}_{t}} \frac{M^{u_{k+1}}_{t}}{M^{v_{k+1}}_{t}}
  = \frac{M^{v_{k}}_{t}}{M^{v_{k+1}}_{t}},
\end{align*}
which yields that the dynamics of defaultable rates are of exponential-affine
form. Positivity follows from Lemma \ref{positivity}(3) and condition (C2).
\end{proof}

Finally, we summarize below the main properties of the defaultable affine LIBOR
model.

\begin{proposition}
Suppose the conditions of Propositions \ref{initial-fit} and
\ref{default-initial-fit} are satisfied and assume \eqref{hard-ass}. Then the
defaultable affine LIBOR model given by \eqref{mg-2}--\eqref{mg-N} and
\eqref{model-IH} with initial conditions \eqref{initial-cond} and
\eqref{martingale-initial-v} is free of arbitrage. The LIBOR rates
$L(\cdot,T_k)$ and the defaultable LIBOR rates $\overline{L}(\cdot, T_{k})$ are
non-negative a.s., for all $k \in \cK\setminus\{N\}$,  and have
exponentially-affine dynamics.
\end{proposition}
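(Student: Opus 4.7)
The plan is to show that the statement is an assembly of the results established in the previous subsections, with each claim reduced to a consequence of a specific earlier proposition or lemma. Since the conditions of Proposition \ref{initial-fit} are in force, we obtain the sequence $(u_k)_{k\in\mathcal{K}}$ fitting the initial default-free term structure, and from Proposition \ref{default-initial-fit} we obtain the sequence $(v_k)_{k\in\mathcal{K}}$ fitting the initial defaultable term structure, with $v_k \le u_k$ for all $k$. The assumption \eqref{hard-ass} then places us squarely within the hypotheses of Proposition \ref{H-affine-specification}.

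For the arbitrage-free claim, I would first invoke Proposition \ref{H-affine-specification} to conclude that the specification \eqref{model-IH} of $\ha(\cdot,T_k)$ satisfies conditions (\textbf{A1}) and (\textbf{A2}); in particular, $\ha(\cdot,T_k) \in \cM(\P_{k+1})$, which is precisely the martingale condition \eqref{H-martingale} of Proposition \ref{arbitrage-free-def-model}. That proposition then supplies an $\bF$-adapted, right-continuous, non-decreasing process $\Gamma$ via \eqref{Gamma-via-H} (using the linear interpolation remark between tenor points) such that the consistency relation \eqref{H-and-tau} holds. Applying the Cox construction of Section \ref{Cox} to this $\Gamma$ produces a default time $\tau$ with $\bF$-hazard process $\Gamma$, and since \eqref{H-and-tau} is exactly the no-arbitrage condition derived in Lemma \ref{lemma:H-and-tau}, the defaultable bonds $B^0(\cdot,T_k) = \ifnodefault{\cdot}\overline{B}(\cdot,T_k)$ are correctly priced as $\pi_\cdot(\ifnodefault{T_k})$ under each forward measure, establishing absence of arbitrage.

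Non-negativity of the default-free LIBOR rates $L(\cdot,T_k)$ is immediate from Proposition \ref{libor-positiv}. Non-negativity of $\overline{L}(\cdot,T_k)$ follows directly from the representation \eqref{defaultable-Libor}, together with the fact that $v_k \ge v_{k+1}$ (by Proposition \ref{default-initial-fit}) combined with Lemma \ref{positivity}(3), which yields $\psi_{T_N-t}(v_k) \ge \psi_{T_N-t}(v_{k+1})$ and $\phi_{T_N-t}(v_k) \ge \phi_{T_N-t}(v_{k+1})$, so that $1+\delta_k\overline{L}(t,T_k) \ge 1$. The exponentially-affine dynamics are then read off from \eqref{mg-3} for the default-free rates and from \eqref{defaultable-Libor} for the defaultable rates, using the notation $A$ and $B$ defined in \eqref{def-A}--\eqref{def-B}.

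No step is genuinely difficult here, since all the analytical work has been done in the preceding propositions; the only point requiring care is making the chain of implications transparent, in particular the passage from the abstract martingale requirement (\textbf{A2}) to the concrete no-arbitrage statement \eqref{H-and-tau} via the Cox-constructed default time. I would therefore keep the proof short and purely referential, with the main emphasis on explicitly invoking \eqref{H-and-tau}, Proposition \ref{arbitrage-free-def-model} and the Cox construction in sequence to justify arbitrage-freeness, and simply citing the other results for non-negativity and the affine form.
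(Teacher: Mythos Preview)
Your proposal is correct and follows essentially the same approach as the paper: the paper's own proof is extremely terse, citing only Propositions~\ref{arbitrage-free-def-model} and~\ref{H-affine-specification} for arbitrage-freeness and remarking that the other claims were already proved above. Your version simply makes the referential chain more explicit (including the roles of Lemma~\ref{lemma:H-and-tau}, the Cox construction, Proposition~\ref{libor-positiv}, and Lemma~\ref{positivity}(3)), which is consistent with and slightly more detailed than the paper's treatment.
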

\begin{proof}
The defaultable affine LIBOR model is free of arbitrage by Propositions
\ref{arbitrage-free-def-model} and \ref{H-affine-specification}. The other
claims were already proved above.
\end{proof}

\begin{remark}
Note that equation \eqref{Gamma-via-H} implies that, in \dalms, the hazard
process $\Gamma$ is an affine transformation of the driving affine process $X$
at tenor dates $T_k$, $k\in\cK$. More precisely, we have
\begin{align}\label{affine-hazard}
\Gamma_{T_{k+1}}
 & = \ln \left( \ha(T_k, T_k)^{-1}\right)
   = \ln \left(\frac{M_{T_k}^{u_{k+1}}}{M_{T_k}^{v_{k+1}}}\right) \nonumber\\
 & = A_{T_N-T_k}(u_{k+1},v_{k+1})
     + \big\langle B_{T_N-T_k}(u_{k+1},v_{k+1}),X_{T_k}\big\rangle,
\end{align}
where $A$ and $B$ are defined in \eqref{def-A}--\eqref{def-B}. In addition, we
can embed this model in a Heath--Jarrow--Morton framework for defaultable bonds,
by extending the tenor structure to a continuous term structure. This extension
preserves the properties of the model, in particular \eqref{affine-hazard}
remains true. This provides a direct link between \dalms and intensity models
for credit risk which are driven by affine processes; see
\cite[Chapter~22]{BrigoMercurio06} for a detailed overview of intensity models.
\end{remark}

\subsection{Example: independent affine processes}
\label{example}

In this subsection, we provide an example of two families of processes
$\{M^{u_k}; k \in \cK\}$ and $\{M^{v_k}; k \in \cK \}$ which satisfy our
modeling requirements, in particular inequality \eqref{dalm-ineq}. The
construction relies on independent affine processes.

Let $d_1,d_2\in\mathbb{N}$ with $d=d_1+d_2$, where $d$ is the dimension of the
affine process $X$. The first $d_1$ and the last $d_2$ components of $X$ are
$d_1$-, respectively $d_2$-dimensional affine processes, denoted by $X^1$ and $X^2$, assuming that the
filtration $\bF$ is generated by $X^1,X^2$; see Proposition 4.8 in
\cite{KellerRessel08}. In addition, we assume that $X^1$ and $X^2$ are mutually \emph{independent}.
Then, we have the following result
\begin{eqnarray}\label{eq:indep-affine}
\E_N \big[ \e^{\langle u, X_{T_{N}\rangle }} \big| \cF_t\big]
 & = & \exp \Big(\phi_{T_N-t}(u) +
      \big\langle\psi_{T_N-t}(u),X_t\big\rangle\Big) \nonumber\\
 & = & \exp \Big(\phi^1_{T_N-t}(u^1) + \phi^2_{T_N-t}(u^2) \\\nonumber
 && \qquad\quad + \big\langle\psi^1_{T_N-t}(u^1),X^1_t\big\rangle +
       \big\langle\psi^2_{T_N-t}(u^2),X^2_t\big\rangle\Big),
\end{eqnarray}
where $\phi^i$ and $\psi^i$ correspond to $X^i$, $i=1,2$, in the sense of
Assumption (\textbf{A}), while $u=(u^1, u^2)\in\R^{d_1}\times\R^{d_2}$. See
\cite[Proposition 4.7]{KellerRessel08}.

The families of $\P_N$-martingales $\{M^{u_k}\}$ and $\{M^{v_k}\}$ are
constructed in the following way:

\textit{Step 1:} We begin by constructing martingales $\{M^{u_k}; k \in \cK\}$.
First we apply Proposition \ref{initial-fit} to the initial values of the LIBOR
rates and the driving process $X^1$. We obtain a decreasing sequence $\bar{u}_1
\geq \bar{u}_2 \geq \cdots \geq \bar{u}_N =0$, where $\bar{u}_k \in \R^{d_1}$,
for every $k \in \cK$. For each $\bar{u}_k$, let us denote $u_k:=(\bar{u}_k, 0,
\ldots, 0) \in \R^{d}$. Then we have
\begin{eqnarray*}
M_t^{u_k} & = &  \exp \Big(\phi^1_{T_N-t}(\bar{u}_k) +
\big\langle\psi^1_{T_N-t}(\bar{u}_k),X_t^1\big\rangle\Big)\\
& = & \exp \Big(\phi_{T_N-t}(u_k) +
\big\langle\psi_{T_N-t}(u_k),X_t\big\rangle\Big),
\end{eqnarray*}
where the second equality follows from \eqref{eq:indep-affine} and Lemma
\ref{positivity}(1) applied to $X^2$. The martingales $(M^{u_k})_{k\in\cK}$, are
used to model \lib rates.

\textit{Step 2:} Next, we construct the processes $\frac{M^{v_k}}{M^{u_k}}$,
$k\in\mathcal{K}$, by setting
$$
\frac{M^{v_k}_t}{M^{u_k}_t}
 =  \exp \Big(\phi^2_{T_N-t}(\bar{w}_k) +
   \big\langle\psi^2_{T_N-t}(\bar{w}_k),X_t^2\big\rangle\Big)
 =: M_t^{\bar{w}_k},
$$
where $\bar{w}_k \in \R^{d_2}$ are obtained by applying the same procedure as in
the proof of Proposition \ref{default-initial-fit} to the affine process $X^2$
and the initial values
$$
M^{\bar{w}_k}_0 =
\ha(0, T_{k-1}) = \frac{\overline{B}(0, T_k)}{B(0, T_k)} \leq 1, \qquad k\in\cK.
$$
Note that $\bar{w}_k \leq 0$ by construction. Moreover, the sequence $0 \geq
\bar{w}_1 \geq \bar{w}_2 \geq \cdots \geq \bar{w}_N$ is decreasing, which
follows from the initial conditions
$$
\frac{\overline{B}(0, T_k)}{B(0, T_k)}  \geq \frac{\overline{B}(0,
T_{k+1})}{B(0, T_{k+1})}.
$$

Consequently, applying the ordering \eqref{M-order} we directly conclude that
$$
\frac{M^{v_k}_t}{M^{u_k}_t} \geq \frac{M^{v_{k+1}}_t}{M^{u_{k+1}}_t}
\quad\text{ since }\quad
M_t^{\bar{w}_k} \ge M_t^{\bar{w}_{k+1}},
$$
for every $k\in\cK\setminus\{N\}$ and every $t\in[0,T_k]$. Hence, condition
\textbf{(A1)} is satisfied.

\textit{Step 3:} Finally, it remains to verify condition \textbf{(A2)}, which
reads  as follows: $\ha(\cdot, T_{k-1}) M^{u_{k}} = M^{v_k} \in \cM(\P_N)$. We
have
\begin{eqnarray*}
M^{v_k}_t
& = & M^{u_{k}}_t \cdot M_t^{\bar{w}_k}\\
& = & \exp \Big(\phi^1_{T_N-t}(\bar{u}_k) + \phi^2_{T_N-t}(\bar{w}_k) \\
&& \qquad \quad + \big\langle\psi^1_{T_N-t}(\bar{u}_k),X_t^1\big\rangle +
  \big\langle\psi^2_{T_N-t}(\bar{w}_k),X_t^2\big\rangle\Big) \\
& = & \exp \Big(\phi_{T_N-t}(v_k) +
  \big\langle\psi_{T_N-t}(v_k),X_t\big\rangle\Big) \\
& = &  \E_N\big[\e^{\langle v_k, {X_{T_N}} \rangle } \big| \cF_t\big],
\end{eqnarray*}
where we defined $v_k:=(\bar{u}_k, \bar{w}_k) \in \R^d$. Hence, $M^{v_k}
\in \cM(\P_N)$, for all $k \in \cK$.

\begin{remark}
The existence of \textit{dependent} affine processes that satisfy these
requirements remains an open question. More generally, the construction of
ordered martingales that satisfy inequality \eqref{dalm-ineq} seems to be
non-trivial.
\end{remark}

\section{Pricing credit derivatives}
\label{dalm-price}

The pricing of credit derivatives in the defaultable \alm is a (relatively)
simple task due to the analytical tractability of the model. In particular, we
can derive explicit expressions for derivatives with line\-ar payoffs, such as
credit default swaps, and semi-analytical formulas for products with non-linear
payoffs, utilising the affine property and Fourier methods. Our formulas do not
involve any approximation; compare with \cite{EberleinKlugeSchoenbucher06}
where approximations are necessary.

An essential tool for the pricing of credit derivatives are restricted
defaultable forward measures, introduced by \cite{Schoenbucher00} and further
exploited by \cite{EberleinKlugeSchoenbucher06}. They are the restrictions
of defaultable forward martingale measures, also called survival measures and
defined on $(\Omega,\cG_{T_{k}})$, to the sub-$\sigma$-fields $\cF_{T_{k}}$ for
each $k\in\mathcal{K}$; see \cite[Defs.~15.2.1,~15.2.2]{BieleckiRutkowski02}.

\begin{definition}
The restricted defaultable forward martingale measure $\overline{\P}_k$
associated to the maturity $T_{k}$, $k\in\mathcal{K}$, is given on
$(\Omega,\cF_{T_{k}})$ by
\begin{align}\label{def-measure}
\frac{\dd \overline{\P}_k}{\dd \P_k}\Big|_{\cF_{t}}
 = \frac{B(0, T_{k})}{\overline{B}(0, T_{k})}
   \P_k (\tau > T_{k}| \cF_{t}).
\end{align}
\end{definition}

The explicit relation between the default time $\tau$ and the forward default
intensity yields
\begin{align*}
\P_k (\tau > T_{k}| \cF_{t})
 &= \E_k \big[\e^{- \Gamma_{T_{k}}}|\cF_{t}\big]\\
 &= \E_k \big[ \ha(T_{k-1}, T_{k-1}) | \cF_{t} \big]
  = \ha(t,T_{k-1}),
\end{align*}
hence we can deduce that
\begin{equation}
\label{P-bar}
\frac{\dd \overline{\P}_k}{\dd \P_k}\Big|_{\cF_{t}}
 = \frac{B(0,T_{k})}{\overline{B}(0, T_{k})} \cdot
   \frac{M_{t}^{v_{k}}}{M_{t}^{u_{k}}}.
\end{equation}
Moreover, the density process between the restricted defaultable forward
measures is
described by
\begin{align}
\frac{\dd \overline{\P}_k}{\dd \overline{\P}_{k+1}}\Big|_{\cF_{t}}
 = \frac{\overline{B}(0, T_{k+1})}{\overline{B}(0,T_{k})} \cdot
   \frac{M_{t}^{v_{k}}}{M_{t}^{v_{k+1}}};
\end{align}
compare with expression \eqref{Pk-to-next} for the default-free forward
measures.

These results clarify some important properties of the defaultable \alm. On the
one hand, it easily follows from \eqref{defaultable-Libor} that the defaultable
LIBOR rate $\overline{L}(\cdot,T_{k})$ is a $\overline{\P}_{k+1} $-martingale.
On the other hand, we can deduce that the defaultable \alm remains
\textit{analytically tractable}, in the sense that the driving process preserves
the affine property under any restricted defaultable forward measure. Of course,
as in the default-free case, it becomes time-inhomogeneous. Indeed, reasoning as
in \eqref{Pk-mgf}--\eqref{Pk-mgf-4}, we get that
\begin{align}\label{Pbk-mgf-2}
\overline\E_k\big[\e^{\langle w,X_{t}\rangle}\big]
 &= \E_N\bigg[\e^{\langle w,X_{t}\rangle} \cdot \frac{\ud\overline\P_k}{\ud\P_k}
      \frac{\ud\P_k}{\ud\P_N}\Big|_{\F_t}\bigg]
  = \E_N\bigg[\e^{\langle w,X_{t} \rangle}
      \frac{M_t^{v_k}}{M_0^{v_k}} \bigg] \nonumber\\
 &= \exp\left(\overline\phi^k_t(w) + \scal{\overline\psi^k_t(w)}{x}\right)\;,
\end{align}
where \begin{align}
\overline\phi^k_t(w)
 &:= \phi_t(\psi_{T_N-t}(v_{k})+ w) - \phi_t(\psi_{T_N-t}(v_k)),
\label{Pbk-mgf-3}\\
\overline\psi^k_t(w)
 &:= \psi_t(\psi_{T_N-t}(v_{k})+ w) - \psi_t(\psi_{T_N-t}(v_k)).
\label{Pbk-mgf-4}
\end{align}

\subsection{Credit default swaps}

\emph{Credit default swaps} are credit derivatives used to provide protection
against default of an underlying asset. Consider a maturity date $T_{m}$ and a
defaultable coupon bond with fractional recovery of treasury value as the
underlying asset. The coupons with value $c$ are promised to be paid at the
dates $T_{1}, \ldots, T_{m}$ and, in case of default before maturity, a fixed
fraction $\pi \in [0,1)$ of the notional is received by the owner of the bond.
The protection buyer in such a credit default swap pays a fixed amount
$\mathpzc{S}$ periodically at dates $T_{0}, T_{1}, \ldots, T_{m-1}$ until
default and the protection seller promises to make a payment that covers the
loss if default happens, i.e.
$$
1-\pi (1+c)
$$
is paid to the protection buyer at $T_{k+1}$ if default occurs in
$(T_{k},T_{k+1}]$,
$k \in \{0, 1, 2, \ldots, m-1\}$.

The value at time 0 of the fee payments is given by
$$
\mathpzc{S} \sum_{l=1}^{m} \overline{B}(0, T_{l-1}),
$$
and the value of the default payment is given by
\begin{eqnarray*}
\lefteqn{\sum_{k=1}^{m} \Big( B(0, T_k) \E_k \big[(1-\pi (1+c))
\left(\indik_{\{\tau>T_{k-1}\}} - \indik_{\{\tau >T_{k}\}} \right) \big]\Big)}\\
& = & (1-\pi (1+c)) \sum_{k=1}^{m} \Big( \overline{B}(0, T_{k}) \delta_{k-1}
     \overline\E_k\big[H(T_{k-1},T_{k-1})\big] \Big);
\end{eqnarray*}
see \cite[Lemma 4 and Section 6]{EberleinKlugeSchoenbucher06}. The CDS rate,
also known as the CDS spread, is defined as the level $\mathpzc{S}$ that makes
the value of the credit default swap at inception equal to zero. We have that
\begin{equation}\label{cds-1}
\mathpzc{S} = \frac{1-\pi (1+c)}{ \sum_{l=1}^{m} \overline{B}(0, T_{l-1}) }
     \sum_{k=1}^{m} \Big( \overline{B}(0, T_{k}) \delta_{k-1}
     \overline\E_k\big[H(T_{k-1},T_{k-1})\big] \Big).
\end{equation}
In the defaultable \alm the forward default intensity has an exponential affine
form, in particular we have from \eqref{def-intensity-affine}
\begin{align}\label{cds-2}
1+\delta_{k-1} H(T_{k-1},T_{k-1})
 &= \frac{M^{v_{k-1}}_{T_{k-1}} M^{u_k}_{T_{k-1}}}
         {M^{u_{k-1}}_{T_{k-1}} M^{v_k}_{T_{k-1}}}
  = \e^{A_k + B_k \cdot X_{T_{k-1}}},
\end{align}
where
\begin{align*}
A_{k} &:= \phi_{T_N-T_{k-1}}(v_{k-1}) - \phi_{T_N-T_{k-1}}(u_{k-1})
        - \phi_{T_N-T_{k-1}}(v_k) + \phi_{T_N-T_{k-1}}(u_k), \\
B_{k} &:= \psi_{T_N-T_{k-1}}(v_{k-1}) - \psi_{T_N-T_{k-1}}(u_{k-1})
        - \psi_{T_N-T_{k-1}}(v_k) + \psi_{T_N-T_{k-1}}(u_k).
\end{align*}
Using the affine property of $X$ under restricted defaultable forward measures,
we can deduce a closed-form expression for the CDS spread. We have, from
\eqref{cds-1}, \eqref{cds-2} and \eqref{Pbk-mgf-2}, that
\begin{align}
\mathpzc{S}
 &= \frac{1-\pi (1+c)}{ \sum_{k=1}^{m} \overline{B}(0, T_{k-1})} \nonumber\\
 &\quad\times \sum_{k=1}^{m} \overline{B}(0, T_{k}) \left(
 \exp\Big\{ A_k + \overline\phi^k_{T_{k-1}}(B_k)
  + \big\langle\overline\psi^k_{T_{k-1}}(B_k),X_0\big\rangle \Big\} - 1 \right).
\end{align}

\begin{remark}
Analogous closed-form expressions for other credit derivatives with linear
payoffs, such as total rate of returns swaps and asset swap packages, can be
easily derived. See \cite[\S 4.6]{Kluge05} for more details on credit
derivatives in defaultable \lib models.
\end{remark}

\begin{remark}
Note that when the processes driving the risk-free interest rates and the
default intensities are independent (in other words, when the risk-free rates
and the default time are independent), the CDS spread can be expressed as a
function of the initial default-free and defaultable bond prices and is
model-independent. This is a well-known property, discussed for example in
\cite[\S~21.3.5]{BrigoMercurio06} for the continuous tenor case. Let us show
it in our framework. We recall Example \ref{example} and first note that
\begin{equation}
\label{H-indep-case}
\overline{\E}_k \left[H(T_{k-1},T_{k-1})\right] = H(0,T_{k-1}).
\end{equation}
Then, it follows directly from \eqref{cds-1} that
\begin{align*}
\mathpzc{S} & = \frac{1-\pi (1+c)}{ \sum_{l=1}^{m} \overline{B}(0, T_{l-1}) }
     \sum_{k=1}^{m} \Big( \overline{B}(0, T_{k}) \delta_{k-1}
     H(0,T_{k-1}) \Big) \\
     & = \frac{1-\pi (1+c)}{ \sum_{l=1}^{m} \overline{B}(0, T_{l-1}) }
     \sum_{k=1}^{m}  \frac{\overline{B}(0, T_{k-1}) B(0, T_{k}) -
\overline{B}(0, T_{k}) B(0, T_{k-1}) }{B(0, T_{k-1})}.
\end{align*}
This formula can be used to bootstrap the initial defaultable bond prices from
the CDS spreads quoted in the market. In order to show \eqref{H-indep-case}, we
use the independence and the martingale property of $M^u$ and $M^{\bar{w}}$; we
have
\begin{align*}
\overline{\E}_k \left[1+\delta_{k-1} H(T_{k-1},T_{k-1})\right]
 &=   \E_N \left[ \frac{M^{v_{k-1}}_{T_{k-1}} M^{u_k}_{T_{k-1}}}
         {M^{u_{k-1}}_{T_{k-1}} M^{v_k}_{T_{k-1}}} \frac{M^{v_{k}}_{T_{k-1}}}
         {M^{v_k}_{0}}\right] \\
 &= \frac{1}{{M^{v_k}_{0}}}
    \E_N \left[ M^{\bar{w}_{k-1}}_{T_{k-1}} M^{u_k}_{T_{k-1}} \right] \\
 &= \frac{1}{{M^{v_k}_{0}}}
    \E_N [ M^{\bar{w}_{k-1}}_{T_{k-1}} ]  \E_N [M^{u_k}_{T_{k-1}} ] \\
 &= \frac{M^{\bar{w}_{k-1}}_{0} }{{M^{\bar{w}_{k}}_{0}}}
  = 1+\delta_{k-1} H(0,T_{k-1}).
\end{align*}
\end{remark}

\subsection{Options on defaultable bonds}

We consider now options on defaultable bonds, and focus on a European call on a
defaultable zero coupon bond, for simplicity. Options on defaultable fixed or
floating coupon bonds can be treated similarly. Let $T_i$ be the maturity and
$K\in(0, 1)$ the strike of a call option on a defaultable zero coupon bond with
maturity $T_m \geq T_i$. We follow \cite{Kluge05}, and adopt the fractional
recovery of treasury value scheme, which means that in case of default prior to
maturity of the bond the owner receives the amount $\pi\in (0, 1)$ at maturity
$T_m$; see \cite{BieleckiRutkowski02} for details and alternative recovery
schemes. We denote the price of this bond by $B^{\pi}(\cdot, T_m)$ and its
time-$t$ value equals
\begin{align*}
B^{\pi}(t,T_m) = \pi B(t,T_m) + (1-\pi)\indik_{\{\tau>t\}}\overline{B}(t,T_m).
\end{align*}
The payoff of the option at maturity $T_i$ is given by $\indik_{\{\tau > T_i\}}
(B^{\pi}(T_i, T_m) - K)^+$, which means that it is knocked out at default.

The price of this option, using \eqref{def-measure}, is provided by
\begin{align}\label{def-options}
\pi_0^{\text{CO}}
 & = B(0, T_i)\, \E_i [\indik_{\{\tau > T_i\}} (B^{\pi}(T_i, T_m) - K)^+]
   \nonumber\\
 & = \overline{B}(0, T_i)\, \overline{\E}_i [(\pi B(T_i, T_m)  +
     (1-\pi)\overline{B}(T_i, T_m) - K)^+] \nonumber\\
 &= \overline{B}(0,T_i)\, \overline\E_i \left[\left(
     \pi \prod_{l=i}^{m-1} (1+\delta_lL(T_i,T_l))^{-1} \right.\right.\nonumber\\
 &\qquad\qquad\qquad\qquad \left.\left.
     + (1-\pi) \prod_{l=i}^{m-1} (1+\delta_l\overline{L}(T_i,T_l))^{-1}
     - K\right)^+\right].
\end{align}
Now, in the default-free and \dalms we have that
\begin{align*}
1+\delta_lL(T_i,T_l)
 &= \frac{M_{T_i}^{u_l}}{M_{T_i}^{u_{l+1}}}
  = \exp\left( A_{i,l} + \scal{B_{i,l}}{X_{T_i}} \right), \\
1+\delta_l \overline{L}(T_i,T_l)
 &= \frac{M_{T_i}^{v_l}}{M_{T_i}^{v_{l+1}}}
  = \exp\left( \overline{A}_{i,l} + \scal{\overline{B}_{i,l}}{X_{T_i}} \right),
\end{align*}
where
\begin{align*}
 A_{i,l} &= A_{T_N-T_i}(u_l,u_{l+1}),\qquad
 B_{i,l} = B_{T_N-T_i}(u_l,u_{l+1}),\\
 \overline{A}_{i,l} &= A_{T_N-T_i}(v_l,v_{l+1}),\qquad
 \overline{B}_{i,l} = B_{T_N-T_i}(v_l,v_{l+1}).
\end{align*}
Therefore, for the product terms in \eqref{def-options} we get that
\begin{align*}
\label{product-Libor}
\prod_{l=i}^{m-1} (1+\delta_lL(T_i,T_l))^{-1}
 &= \exp\left( A^m_i + \scal{B^m_i}{X_{T_i}} \right),
\end{align*}
and
\begin{align*}
\prod_{l=i}^{m-1} (1+\delta_l\overline{L}(T_i,T_l))^{-1}
 &= \exp\left( \overline{A}^m_i
    + \scal{\overline{B}^m_i}{X_{T_i}} \right),
\end{align*}
with the obvious definitions
\begin{align}
A^m_i &= -\sum_{l=i}^{m-1} A_{i,l}, \qquad
B^m_i  = -\sum_{l=i}^{m-1} B_{i,l} \\
\overline{A}^m_i &= -\sum_{l=i}^{m-1} \overline{A}_{i,l}, \qquad
\overline{B}^m_i  = -\sum_{l=i}^{m-1} \overline{B}_{i,l}.
\end{align}
Therefore, returning to the option pricing problem, we have
\begin{align}\label{def-options-2}
\pi_0^{\text{CO}}
 &= \overline{B}(0,T_i)\, \overline\E_i \left[\left(
     \pi \e^{A^m_i + \scal{B^m_i}{X_{T_i}}}
     + (1-\pi) \e^{\overline{A}^m_i + \scal{\overline{B}^m_i}{X_{T_i}}}
     - K\right)^+\right] \nonumber\\
 &= \overline{B}(0,T_i)\, \overline\E_i \left[\left(
     \e^{Y_1} + \e^{Y_2} - K\right)^+\right],
\end{align}
where
\begin{align}
Y_1 &:= \log\pi + A^m_i + \scal{B^m_i}{X_{T_i}}, \\
Y_2 &:= \log(1-\pi) + \overline{A}^m_i
      + \scal{\overline{B}^m_i}{X_{T_i}}.
\end{align}

Now, the expression in \eqref{def-options-2} corresponds to the payoff of a
spread option
\begin{align}
 g(x_1,x_2) = (\e^{x_1} + \e^{x_2} - K)^+,
\end{align}
whose Fourier transform is
\begin{align}
\widehat{g}(z) = K^{1+iz_1+iz_2}
  \frac{\Gamma(iz_2)\Gamma(1-iz_1-iz_2)}{\Gamma(1-iz_1)},
\end{align}
for $z\in\mathcal{Y}:=\{z\in\C^2:\Im z_2 <0, \Im(z_1+z_2)>1\}$; see
\cite{HubalekKallsen03} and \cite{HurdZhou09}. Here, $\Gamma$ denotes the
Gamma function. Therefore, using also
\cite[Thm.~3.2]{EberleinGlauPapapantoleon08}, we have that the price of an
option on a defaultable zero coupon bond admits the following semi-analytical
expression:
\begin{align}
\pi_0^{\text{CO}}
 &= \frac{\overline{B}(0,T_i)}{4\pi^2} \int_{\R^2} \widehat{g}(iR-w)
     M_Y(R+iw) \dd w,
\end{align}
for $iR\in\mathcal{Y}$ such that $M_Y(R)<\infty$, where $M_Y$ denotes the moment
generating function of the random vector $Y=(Y_1,Y_2)$. This can be computed
explicitly using the affine property of the driving process $X$ under the
measure $\overline\P_i$. We have
\begin{align}
M_Y(w)
 &= \overline\E_i \big[\e^{\scal{w}{Y}}\big]
  = \overline\E_i \big[\e^{w_1Y_1 + w_2Y_2}\big] \nonumber\\
 &= \mathcal{X}\, \overline\E_i
    \big[\e^{\scal{w_1 B^m_i + w_2 \overline{B}^m_i}{X_{T_i}}}\big]
    \nonumber\\
 &= \mathcal{X} \exp\left\{
     \overline\phi^i_{T_i}(w_1 B^m_i + w_2 \overline{B}^m_i)
   + \big\langle\overline\psi^i_{T_i}(w_1 B^m_i + w_2 \overline{B}^m_i),
     X_0\big\rangle \right\},
\end{align}
where
\begin{align}
\mathcal{X}
 &= \exp\big\{ w_1(\log\pi + A^m_i)
     + w_2(\log(1-\pi) + \overline{A}^m_i) \big\}.
\end{align}

\begin{remark}
Similar semi-analytical expressions can be derived for other derivatives with
non-linear payoffs, such as credit spread options. Credit default swaptions,
also known as CDS options, are more difficult to handle, but expressions
involving a high-dimensional integration can still be derived; see also
\cite[\S7.3]{KellerResselPapapantoleonTeichmann09}.
\end{remark}

\subsection{Vulnerable options}

The term \textit{credit risk} applies to two different types of risk:
\textit{reference risk} and \textit{counterparty risk}. Reference risk is the
risk associated with an underlying asset (reference) in a contract, whereas
counterparty risk refers to any kind of risk associated with either of the
counterparties involved in a contract; see Figure \ref{cc} for a graphical
representation. Contingent claims with reference risk traded over-the-counter
between default-free parties are labeled \emph{credit derivatives} and the
collective name \emph{vulnerable claims} refers to contingent claims traded
over-the-counter between default-prone parties with an underlying asset that is
assumed to be default-free. The name \emph{vulnerable} goes back to
\cite{JohnsonStulz87},  who studied the impact of default risk of an option
writer on option prices. We mention also some other papers studying counterparty
risk such as  \cite{JarrowTurnbull95}, \cite{HullWhite95}, \cite{HugeLando99}
and \cite{LotzSchloegl00}.

In this section, we study an application of the defaultable affine LIBOR model
to the pricing of vulnerable options. Again, we obtain an analytical expression
for the price of a vulnerable option which does not involve any approximations;
compare with \cite[Section 4.3]{Grbac10} where vulnerable options are studied
in the defaultable L\'evy LIBOR framework which requires ``frozen drift"-type
approximations.

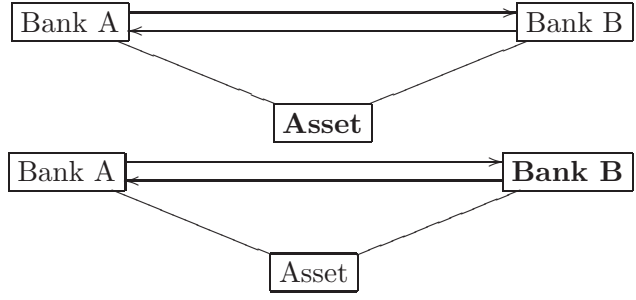
\begin{figure}
\begin{align*}
\xymatrix{
*+[F]\txt{Bank A}\ar@<.75ex>[rrrr] \ar@{-}[drr]
&&&& *+[F]\txt{Bank B} \ar@<.75ex>[llll] \ar@{-}[dll]\\
&& *+[F]{\txt{\textbf{Asset}}} && \\}\\
\xymatrix{
*+[F]\txt{Bank A}\ar@<.75ex>[rrrr] \ar@{-}[drr]
&&&& *+[F]{\txt{\textbf{Bank B}}} \ar@<.75ex>[llll] \ar@{-}[dll]\\
&& *+[F]\txt{Asset} && \\}
\end{align*}
\caption{A graphical illustration of reference risk (top) vs. counterparty risk
(bottom). The bold letters indicate where default risk lies in a contract
between counterparties A and B.}
\label{cc}
\end{figure}

A vulnerable European call option with maturity $T_k$ and strike $K$ on a
default-free bond $B(\cdot, T_m)$, where $T_m \geq T_k$, has a payoff  given by
\begin{align*}
\overline{C}_{T_{k}}
 &= C_{T_{k}} \ifnodefault{T_k} + q C_{T_{k}} \ifdefault{T_k}\\
 &= (C_{T_{k}}-q C_{T_{k}}) \ifnodefault{T_k} + q C_{T_{k}},
\end{align*}
where $C_{T_{k}}=(B(T_{k},T_{m})-K)^{+} $ is the payoff at maturity $T_k$ of a
European call option written on a default-free bond with maturity $T_m$, $\tau$
is the default time of the writer of the option and $q$ the recovery rate (in
case of default the payoff of the option at maturity is reduced by a factor $q
\in[0,1]$).

Therefore, we give here a new interpretation to defaultable bonds
$B^{0}(\cdot,T_k)$, which are now assumed to be issued by the \emph{writer of
the vulnerable option}. Using the definition of the forward \lib rate, we can
rewrite the payoff of the vulnerable option as follows
\begin{eqnarray*}
\overline{C}_{T_{k}} & = & \ifnodefault{T_k} (1-q) (B(T_{k}, T_{m})-K)^{+} + q
(B(T_{k}, T_{m})-K)^{+} \\
& = & \ifnodefault{T_k} (1-q) \left(\prod_{l=k}^{m-1}(1+\delta_{l}L(T_{k},
T_{l}))^{-1}-K\right)^{+} \\
&& \quad  + \, q \left(\prod_{l=k}^{m-1}(1+\delta_{l}L(T_{k},
T_{l}))^{-1}-K\right)^{+}.
\end{eqnarray*}
Its value at time $t=0$ is given by
\begin{align*}
\overline{C}_{0}
&= B(0, T_{k})\, \E_{k} \left[\ifnodefault{T_k} (1-q)
 \left(\prod_{l=k}^{m-1}(1+\delta_{l}L(T_{k}, T_{l}))^{-1}-K\right)^{+} \right.
 \\
& \qquad\qquad\qquad +  \left. q \left(\prod_{l=k}^{m-1}(1+\delta_{l}L(T_{k},
   T_{l}))^{-1}-K\right)^{+}\right] \\
&= \overline{B}(0, T_{k})\, \overline{\E}_{k}  \left[(1-q)
 \left(\prod_{l=k}^{m-1}(1+\delta_{l}L(T_{k}, T_{l}))^{-1}-K\right)^{+} \right]
 \\
& \qquad +  B(0, T_{k})\,
    \E_{k}\left[q \left(\prod_{l=k}^{m-1}(1+\delta_{l}L(T_{k},
      T_{l}))^{-1}-K\right)^{+} \right]\\
& =  \overline{B}(0, T_{k})\, (1-q)\, \overline{\E}_{k} \left[ \left( \e^Z- K
 \right)^{+}\right]  + B(0, T_k)\, q\, \E_{k} \left[ \left( \e^Z- K
 \right)^{+}\right],
\end{align*}
where
\begin{align*}
Z & := A^m_k + \scal{B^m_k}{X_{T_k}},
\end{align*}
see equation \eqref{product-Libor}. The payoff function of a call option
\begin{align*}
g(y) & = (\e^y- K)^+
\end{align*}
has the Fourier transform
\begin{align*}
\widehat{g}(z)& = \frac{K^{1+iz}}{iz(1+iz)},
\end{align*}
for $z\in\mathcal{Z}:=\{z\in\C: \Im z>1\}$. Therefore, using
\cite[Thm.~2.2]{EberleinGlauPapapantoleon08}, we obtain
\begin{align*}
\overline{C}_0 & = \frac{\overline{B}(0, T_{k}) (1-q)}{2 \pi} \int_{\R}
\widehat{g}(i R_1 - v) M_Z^{\overline{\P}_k} (R_1 + iv) \dv  \\
& \ \ +  \frac{B(0, T_{k}) q}{2 \pi} \int_{\R} \widehat{g}(i R_2 - w) M_Z^{\P_k}
(R_2 + iw) \ud w,
\end{align*}
for $iR_1,iR_2\in\mathcal{Z}$ such that $M_Z^{\overline{\P}_k}(R_1)<\infty$ and
$M_Z^{\P_k}(R_2)<\infty$. The moment generating function of $Z$ under
$\overline{\P}_k$ and $\P_k$ respectively, is provided by
\begin{align*}
M_Z^{\overline{\P}_k} (v)
& = \exp \left\{ v A^m_{k} +  \overline\phi^k_{T_k}(v B^m_{k})
   + \big\langle\overline\psi^k_{T_k}(v B^m_{k}),
     X_0\big\rangle \right\}
\end{align*}
and
\begin{align*}
M_Z^{\P_k} (w)
& = \exp \left\{ w A^m_{k} +  \phi^k_{T_k}(w B^m_{k})
   + \big\langle\psi^k_{T_k}(w B^m_{k}),
     X_0\big\rangle \right\}.
\end{align*}

\bibliographystyle{alpha}
\bibliography{references}

\end{document}